\setlist{topsep=3pt, itemsep=3pt}
\newtheorem{theorem}{Theorem}[section]
\newtheorem{lemma}[theorem]{Lemma}
\newtheorem{observation}[theorem]{Observation}
\newtheorem{corollary}[theorem]{Corollary}
\newcommand{\dist}{\mathsf{dist}}
\newcommand{\Path}{\mathsf{Path}}
\newcommand{\NP}{\mathit{NP}}
\newcommand{\eT}{\mathsf{eT}}
\newcommand{\proc}{\mathsf{proc}}
\newcommand{\prev}{\mathsf{prev}}
\newcommand{\bbR}{{\mathbb{R}}}
\newcommand{\DD}{\mathsf{DD}} 
\newcommand{\DDT}{\mathsf{DDT}} 
\newcommand{\DDC}{\mathsf{DDC}} 
\newcommand{\N}{\mathsf{N}} 
\newcommand{\E}{\mathsf{E}}
\title{Package Delivery Using Drones with Restricted Movement Areas}
\author{
Thomas Erlebach\\ Durham University, UK \\
\texttt{ thomas.erlebach@durham.ac.uk}
        \and 
        Kelin Luo\\ University of Bonn, Germany\\
        \texttt{ kluo@uni-bonn.de } 
        \and Frits C.R. Spieksma \\  Eindhoven University of Technology, The Netherlands\\
\texttt{f.c.r.spieksma@tue.nl} 
}
\begin{document}
    \maketitle

\maketitle
\begin{abstract}
For the problem of delivering a package from a source node to a
destination node in a graph using a set of drones, we study the
setting where the movements of each drone are restricted to a certain
subgraph of the given graph. We consider the objectives of
minimizing the delivery time (problem $\DDT$) and of minimizing the total energy
consumption (problem $\DDC$). For general graphs, we show a strong
inapproximability result and a matching approximation algorithm
for $\DDT$ as well as $\NP$-hardness and a $2$-approximation algorithm
for $\DDC$. For the special case of a path, we show that $\DDT$ is
$\NP$-hard if the drones have different speeds.
For trees, we give optimal algorithms under the assumption
that all drones have the same speed or the same energy consumption rate.
The results for trees extend to arbitrary graphs if the subgraph of
each drone is isometric.
\end{abstract}

    \thispagestyle{empty}

 
\section{Introduction}
\label{sec:intro}%
Problem settings where multiple drones collaborate to
deliver a package from a source location to a target
location have received significant attention recently.
One motivation for the study of such problems comes
from companies considering the possibility
of delivering parcels to consumers via drones, e.g., Amazon Prime Air~\cite{amazon22}.
In previous work in this area~\cite{Chalopin2013,Chalopin2014,DBLP:conf/stacs/BartschiC0D0HP17,DBLP:conf/mfcs/Bartschi0M18,DBLP:journals/tcs/BartschiCDDGGLM20,DBLP:journals/jcss/CarvalhoEP21,DBLP:journals/tcs/ChalopinDDLM21},
the drones are typically
modeled as agents that move along the edges of a graph,
and the package has to be transported from a source node
to a target node in that graph. Optimization objectives
that have been considered include minimizing the delivery
time, minimizing the energy consumption by the agents,
or a combination of the two. A common assumption has been
that every agent can travel freely throughout the whole
graph~\cite{DBLP:conf/stacs/BartschiC0D0HP17,DBLP:conf/mfcs/Bartschi0M18,DBLP:journals/jcss/CarvalhoEP21}, possibly with a restriction
of each agent's travel distance due to energy
constraints~\cite{Chalopin2013,Chalopin2014,DBLP:journals/tcs/BartschiCDDGGLM20,DBLP:journals/tcs/ChalopinDDLM21}.
In this paper, we study for the first time a variation
of the problem in which each agent is only allowed to travel
in a certain subgraph of the given graph.

We remark
that the previously considered problem in which each agent has an energy
budget that constrains its total distance
traveled~\cite{Chalopin2013,Chalopin2014,DBLP:journals/tcs/BartschiCDDGGLM20,DBLP:journals/tcs/ChalopinDDLM21} is fundamentally
different from the problem considered here in which each agent can only travel
in a certain subgraph: In our problem, an agent can still
travel an arbitrary distance by moving back and forth many times within
its subgraph. Furthermore, the subgraph in which an agent can travel
cannot necessarily be defined via a budget constraint. This means
that neither hardness results nor algorithmic results translate
directly between the two problems.

As motivation for considering agents with movement
restrictions, we note that in a realistic setting, the
usage of drones may be regulated by licenses that
forbid some drones from flying in certain areas.
The license of a drone operator may only allow that
operator to cover a certain area.
Furthermore, densely populated areas
may have restrictions on which drones are allowed
to operate there.
Package delivery with multiple
collaborating agents might also involve different types
of agents (boats, cars, flying drones, etc.), where it is natural
to consider the case that each agent can traverse only a
certain part of the graph: For example, a boat can only traverse
edges that represent waterways.

In our setting, we are given an undirected graph $G=(V,E)$
with a source node~$s$ and a target node~$y$ of the package
as well as a set of $k$ agents. The subgraph in which an
agent $a$ is allowed to operate is denoted by $G_a=(V_a,E_a)$.
Each agent can pick up the package
at its source location or from another drone, and it can deliver
the package to the target location or hand it to another drone.
We consider the objective of minimizing the delivery time,
i.e., the time when the package reaches~$y$, as well as the
objective of minimizing the total energy consumption of the drones.
We only consider the problem for a single package.

\subparagraph*{Related work.}
Collaborative delivery of a package from a source node $s$
to a target node $y$ with the goal of minimizing the delivery
time was considered by B\"artschi et al.~\cite{DBLP:conf/mfcs/Bartschi0M18}. For $k$ agents
in a graph with $n$ nodes and $m$ edges, they showed that
the problem can be solved in $O(k^2 m+kn^2+\mathrm{APSP})$
time for a single package, where $\mathrm{APSP}$ is the
time for computing all-pairs shortest paths in a graph
with $n$ nodes and $m$ edges. Carvalho et
al.~\cite{DBLP:journals/jcss/CarvalhoEP21} improved the
time complexity for the problem to $O(kn \log n + km)$
and showed that the problem is $\NP$-hard if two packages
must be delivered.

For minimizing the energy consumption for the delivery
of a package, B\"artschi et al.~\cite{DBLP:conf/stacs/BartschiC0D0HP17}
gave a polynomial algorithm for one package and showed that
the problem is $\NP$-hard for several packages.

The combined optimization of delivery time $\mathcal{T}$
and energy consumption $\mathcal{E}$ for the collaborative
delivery of a single package has also been considered.
Lexicographically minimizing $(\mathcal{E},\mathcal{T})$ 
can be done in polynomial time~\cite{DBLP:conf/fct/BartschiT17}, but lexicographically minimizing
$(\mathcal{T},\mathcal{E})$ or minimizing
any linear combination of $\mathcal{T}$ and $\mathcal{E}$
is $\NP$-hard~\cite{DBLP:conf/mfcs/Bartschi0M18}.

Delivering a package using $k$ energy-constrained agents
has been shown to be strongly $\NP$-hard in general graphs~\cite{Chalopin2013}
and weakly $\NP$-hard on a path~\cite{Chalopin2014}.
B\"artschi et al.~\cite{DBLP:journals/tcs/BartschiCDDGGLM20} showed
that the variant where each agent must return to its
initial location can be solved in polynomial
time for tree networks. The variant in which the package
must travel via a fixed path in a general graph
has been studied by Chalopin
et al.~\cite{DBLP:journals/tcs/ChalopinDDLM21}.

\subparagraph*{Our results.}
In Section~\ref{sec:pre}, we introduce definitions and give some auxiliary results.
In Section~\ref{sec:hardness}, we show that movement restrictions make the drone delivery
problem harder for both objectives: For minimizing the
delivery time, we show that the problem is $\NP$-hard to
approximate within ratio $O(n^{1-\epsilon})$ or $O(k^{1-\epsilon})$.
For minimizing the energy consumption, we show $\NP$-hardness.
These results hold even if all agents have the same speed and
the same energy consumption rate.

In Section~\ref{sec:alg_DDT}, we propose an
$O(\min\{n,k\})$-approximation algorithm for the problem
of minimizing the delivery time. The algorithm first
computes a schedule with minimum delivery time for the
problem variant where an arbitrary number of copies
of each agent is available. Then it transforms the
schedule into one that is feasible for the
problem with a single copy of each agent. The algorithm
can also handle handovers on edges.
In Section~\ref{sec:alg_DDC}, we give a
$2$-approximation algorithm for the problem of
minimizing the total energy consumption.
We again first compute an optimal schedule that may use
several copies of each agent and then transform the
schedule into one with a single copy of each agent.

In Section~\ref{sec:line},
we first consider the special case where the graph is a path (line) 
and the subgraph of each agent is a subpath. For this case,
we show that the problem of minimizing the delivery time
is $\NP$-hard if the agents have different speeds. If the agents
have the same speed or the same energy consumption rate,
we show that the problem of minimizing the delivery time or the problem of minimizing
the total energy consumption are both polynomial-time solvable even
for the more general case when the graph is a tree, or when the
graph is arbitrary but the subgraph of every agent is isometric
(defined in Section~\ref{sec:line-alg}).
Conclusions are presented in Section~\ref{sec:conclusion}. 

\section{Preliminaries}
\label{sec:pre}
We now define the \emph{drone delivery} ($\DD$) problem formally. We are given a connected graph $G=(V, E)$ with edge lengths $\ell : E \to \bbR_{\ge 0}$, where $\ell(u, v)$ represents the distance between $u$ and $v$ along edge $\{u,v\}$.
(We sometimes write $G=(V,E,\ell)$ to include an explicit reference to~$\ell$.)
Let $n=|V|$ and $m=|E|$. We are also given a set $A$ containing $k\ge 1$ mobile agents (representing drones).
Each agent $a\in A$ is specified by $a=(p_a, v_a, w_a, V_a, E_a)$, where $p_a\in V $ is the agent's \emph{initial position},  and $v_a> 0$ and $w_a\ge 0$ are the agent's \emph{velocity} (or speed) and \emph{energy consumption rate}, respectively. 
To traverse a path of length~$x$, agent $a$ takes time  $x/ v_a$ and consumes $x\cdot w_a$ units of energy. Furthermore, $V_a\subseteq V $ and $ E_a\subseteq E$ are the \emph{node-range} and \emph{edge-range} of
agent~$a$, respectively. Agent~$a$ can only travel to/via nodes in $V_a$ and edges in $E_a$. We require $p_a\in V_a$. 
To ensure meaningful solutions, we make the following two natural assumptions:  
\begin{itemize} 
\item  For each agent $a$, the graph $G_a=(V_a, E_a)$ is a connected subgraph of $G$.
\item The union of the subgraphs  $G_a$ over all agents is the graph $G$, i.e., $\bigcup_{a\in A} V_a = V$ and $\bigcup_{a\in A} E_a = E$.  This implies that there is a feasible schedule for any package $(s, y)$.
 
\end{itemize}  
The package is specified by $(s, y)$, where $s, y\in V$ are the \emph{start node} (start location) and \emph{destination node} (target location), respectively.  
The task is to find a schedule for delivering the package from the start node to the destination node while achieving a specific objective.
The problem of minimizing the delivery time is denoted by~$\DDT$, and
the problem of minimizing the consumption by~$\DDC$.  
 

To describe solutions of the problems, we need to define how
a \emph{schedule} can be represented. A schedule is given as a list of \emph{trips} $\mathcal{S}=\{S_1, S_2, \ldots, S_h\}$: 
$$\{(a_1, t_1, \langle o_1, \ldots, u_0  \rangle,  \langle u_0, \ldots, u_1 \rangle),  \ldots,  (a_h, t_h, \langle o_h, \ldots, u_{h-1} \rangle,  \langle u_{h-1}, \ldots, u_h \rangle)\}$$ 
The $i$-th trip $S_i=(a_i, t_i, \langle o_i, \ldots, u_{i-1} \rangle,  \langle u_{i-1}, \ldots, u_i \rangle)$ represents two consecutive trips taken by agent $a_i$ starting at time $t_i\ge 0$: an \emph{empty movement} trip
traversing nodes $O_i=\langle o_i, \ldots, u_{i-1} \rangle$, and a \emph{delivery trip} (during which $a_i$ carries the package) of traversing nodes $U_i= \langle u_{i-1}, \ldots, u_{i} \rangle$. The agent $a_i$ picks up the package at node $u_{i-1}$ and drops it off at node $u_i$. With slight abuse of notation, we also use $O_i$ and $U_i$ to denote the set of edges in each of these two trips. If $S_i$ is the first trip of agent $a_i$, then
$o_i$ is the agent's initial location. Otherwise, $o_i$ is the
location where the agent dropped off the package at the end of
its previous trip.
Initially, each agent $a\in A$ is at node $p_a$ at time~$0$.
In the definition of schedules, when we allow two agents to
meet on an edge to hand over the package, we allow the nodes also
to be points on edges: For example, a node $u_i$ could be the point on an edge
$\{u,v\}$ with length $5$ that is at distance $2$ from $u$ (and hence
at distance $3$ from $v$).

Let $T(u_i)$ (resp.\ $C(u_i)$) denote the time passed
(resp.\ the energy consumed) until the package is delivered to node $u_i$ in schedule $\mathcal{S}$, i.e.,
$$T(u_i)= \max\bigm\{T(u_{i-1}), t_i+\sum_{e\in O_i} \frac{\ell(e)}{v_{a_i}}  \bigm\} +\sum_{e \in U_{i}} \frac{\ell(e)}{v_{a_i}}, $$
$$C(u_i)= C(u_{i-1}) + w_{a_i} \cdot  \sum_{e\in O_i} \ell(e) + w_{a_i}\cdot \sum_{e \in U_{i}} \ell(e). $$ 
In the formula for $T(u_i)$, the first term represents the earliest time
when agent $a_i$ can pick up the package at $u_{i-1}$: The package reaches
$u_{i-1}$ at time $T(u_{i-1})$, and the agent $a_i$ reaches $u_{i-1}$
at time $t_i+\sum_{e\in O_i}(\ell(e)/v_{a_i})$. Hence, the maximum of these
two times is the time when both the agent and the package have reached
$u_{i-1}$. The second term in the formula represents the time that
agent~$a_i$ takes, once it has picked up the package at $u_{i-1}$,
to carry it along $U_i$ to $u_i$.
The three terms in the formula for $C(u_i)$ represent the energy consumed
until the package reaches $u_{i-1}$, the energy consumed by agent~$a_i$
to travel from $o_i$ to $u_{i-1}$ along $O_i$, and the energy consumed
by agent $a_i$ to carry the package from $u_{i-1}$ to $u_i$ along $U_i$,
respectively.

The pick-up location of the first agent must be the start node, i.e., $u_0=s$, and the drop-off location of the last agent must be the destination node, $u_{h}=y$. 
We let $T(u_0)= 0 $ and
$C(u_0)=0$. 
The goal of the $\DDT$ problem is to find a feasible schedule  $\mathcal{S}$ that
minimizes the delivery time $T(y)$, and the goal of the $\DDC$ problem is to find a feasible schedule $\mathcal{S}$ that
minimizes the energy consumption $C(y)$.

So far, we have defined the $\DDT$ and $\DDC$ problem. As in previous studies~\cite{DBLP:journals/jcss/CarvalhoEP21}, we further distinguish variants of these problems based on the \emph{handover manner}. 

\textbf{Handover manner.} 
The handover of the package between two agents may occur at a node or at some interior point of an edge. When the handovers are restricted to be on nodes, we say the drone delivery problem is handled with \emph{node handovers}, and when the handover can be done on a node or at an interior point of an edge, we say that the drone delivery problem is handled with \emph{edge handovers}. We use the subscripts $\N$ and $\E$ to represent node handovers and edge handovers, respectively. Thus, we get four variants of the drone delivery problem: $\DDT_\N$, $\DDT_\E$, $\DDC_\N$ and $\DDC_\E$.   

\textbf{With or without initial positions.} We additionally consider problem variants in which the initial positions of the agents are not fixed (given), which means that the initial positions $p_a$ for ${a\in A}$ can be chosen by the algorithm. When the initial positions are fixed, we say that the problem is \emph{with initial positions}, and when the initial positions are not fixed, we say that the problem is \emph{without initial positions}. When we do not specify that a problem is without initial positions, we always refer to the problem with initial positions by default.

In the rest of the paper, in order to simplify notation, the $i$-th trip in a schedule $\mathcal{S}$ is usually written in simplified form as $(a_i, u_{i-1}, u_{i})$, where $a_i$ is the agent, $u_{i-1}$ is the pick-up location of the package, and $u_{i}$ is the drop-off location of the package. We can omit the agent's empty movement trip (including its previous position) and its start time because the agent $a_i$ always takes the path in $G_{a_i}=(V_{a_i},E_{a_i})$ with minimum cost (travel time or energy consumption) from its previous position to the current pick-up location $u_{i-1}$ and then from $u_{i-1}$ to $u_i$.

For each node $v\in V$, we use $A(v)$ to denote the set of agents that can visit the node~$v$; and for every edge $\{u,v\}$ in $E$, we use $A(u, v)$ to denote the set of agents that can traverse the edge $\{u,v\}$.
For any pair of nodes $u, v \in V_a$ for some $a\in A$, we denote by $\dist_a(u,v)$ the length of the shortest path between node $u$ and $v$ in the graph $G_a=(V_a, E_a)$.

\subparagraph*{Useful properties.}
We show some basic properties for the drone delivery problem.
The following lemma can be shown by an exchange argument: If an agent was involved
in the package delivery at least twice, letting that agent keep the package from its first involvement to its last involvement does not increase the delivery time nor the energy consumption.

\begin{restatable}{lemma}{lemminConce}
\label{lem:minT-once}\label{lem:minC-once}
For every instance of $\DDT_\N$, $\DDT_\E$, $\DDC_\N$ and $\DDC_\E$ (with or without initial positions), there is an optimal solution in which each of the involved agents picks up and drops off the package exactly once. 
\end{restatable}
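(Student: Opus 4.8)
The plan is to prove the statement by a local exchange argument applied repeatedly. Starting from any optimal schedule $\mathcal{S}$, I would look for an agent $a$ that carries the package in at least two trips; let $S_i$ and $S_j$ (with $i<j$) be the first and the last trip of $a$, so that $a$ picks up the package at $u_{i-1}$ in $S_i$ and drops it off at $u_j$ in $S_j$. The modified schedule $\mathcal{S}'$ is obtained by deleting all trips $S_i,\dots,S_j$ and inserting a single trip in which $a$ carries the package directly from $u_{i-1}$ to $u_j$ along a shortest path in $G_a$. Since $u_{i-1},u_j\in V_a$ and $G_a$ is connected, this trip is well defined and feasible, and the package path is changed only inside the merged block (it still runs $s\to\dots\to u_{i-1}\to u_j\to\dots\to y$); this reasoning is unaffected by whether handovers are on nodes or on edges (the endpoints may be interior points of edges in $E_a$) and by whether initial positions are fixed. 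After the replacement $a$ is involved exactly once, and since at least two trips are merged into one, the total number of trips strictly decreases, so iterating terminates in a schedule in which every involved agent picks up and drops off the package exactly once. As each step keeps the schedule feasible, the final schedule is optimal provided each step does not increase the objective.

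For $\DDC$ I would argue directly on distances, using that the total energy equals $\sum_a w_a\cdot(\text{total distance traveled by }a)$. In $\mathcal{S}'$ agent $a$ traverses $p_a\to u_{i-1}\to u_j$, of length $\dist_a(p_a,u_{i-1})+\dist_a(u_{i-1},u_j)$, whereas in $\mathcal{S}$ its trajectory visits $p_a$, $u_{i-1}$ and $u_j$ in this order, so by the triangle inequality in $G_a$ its original length is at least as large. Every other agent keeps exactly the trips it had outside the merged block, so its sequence of visited points in $\mathcal{S}'$ is a subsequence of the one in $\mathcal{S}$ (the deleted intermediate pick-up/drop-off points $u_i,\dots,u_{j-1}$ are removed); repeatedly applying the triangle inequality shows that shortcutting a visited sequence never increases its length. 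Hence no agent travels farther and $C(y)$ does not increase.

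The harder part will be the delivery-time objective $\DDT$, because the recurrence for $T$ contains a maximum and because deleting the intermediate trips changes when reused agents become free. My plan is to prove, by induction along the global trip order, the invariant that in $\mathcal{S}'$ every package-arrival time and every agent finish time that appears is no larger than the corresponding quantity in $\mathcal{S}$. The step for the merged trip uses that $a$ reaches $u_{i-1}$ at the same time as before (its empty movement from $p_a$ is unchanged) together with
$$\dist_a(u_{i-1},u_i)+\dist_a(u_i,u_{j-1})+\dist_a(u_{j-1},u_j)\ \ge\ \dist_a(u_{i-1},u_j),$$
so the direct delivery reaches $u_j$ no later than $a$ did in $\mathcal{S}$. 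For each later kept trip $S_m$, the reused agent's previous kept trip has the same drop-off location in both schedules (since the package path is modified only inside the merged block) but finishes no later in $\mathcal{S}'$ by the inductive hypothesis; combining this with the triangle inequality bounds its arrival at the pick-up of $S_m$, while the package itself arrives no later by the hypothesis. Monotonicity of $\max\{\cdot,\cdot\}+(\cdot)$ then propagates the invariant, and in particular $T(y)$ does not increase. The only point requiring care is the bookkeeping of which trip is an agent's ``previous kept trip'' and the verification that its drop-off location is unchanged, which is exactly where the restriction that only the merged block is altered is used.
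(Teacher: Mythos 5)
Your proposal is correct and follows essentially the same exchange argument as the paper: take the first agent involved at least twice, merge its first and last trips into one, delete the intermediate trips, and argue via the triangle inequality (for energy) and via monotone propagation of arrival times (for delivery time) that the objective does not increase, then iterate. The only cosmetic difference is that you reroute the merged agent along a shortest path from $u_{i-1}$ to $u_j$, whereas the paper lets it retrace its original empty-movement route to $u_{j-1}$ while carrying the package; both yield the same bounds, and your treatment of agents whose deleted block trips recur later is, if anything, more explicit than the paper's.
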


\begin{proof}
For an optimal solution, we label the agents $1,2,\ldots$ in the order
in which they carry the package in the schedule $\mathcal{S}$.
Agent $i$ and agent $i+1$ are different for each $i$. We show that there is an optimal solution in which each of the involved agents picks up and drops off the package exactly once.

Assume that there is an agent involved in the optimal schedule $\mathcal{S}$ more than once.  
%
Let $i$ be the first agent in $\mathcal{S}$ such that agent $j>i+1$ and agent $i$ are the same agent. Then we can simply remove the trips of agents $i+1,i+2,..., j-1$ and let agent $i$ keep the package and
travel the edges that agent $i$ originally traversed after dropping off the package at $u_i$ to reach $u_{j-1}$. This is because agent $i$ can reach the drop-off location $u_{j-1}$ of agent $j-1$ before agent $j$ picks up the package at $u_{j-1}$. By doing so, we do not increase the time between the drop-off time of agent $i$ and the pick-up time of agent $j$. Agent~$i$ (equal to agent $j$) can then carry the package further to $u_j$. Furthermore, the later agents $j+1, j+2, \ldots$ can deliver the package following the original schedule on time. By repeating this modification, we can produce an optimal schedule for both $\DDT_\N$ and $\DDT_\E$ such that each agent is involved in the delivery of the package exactly once.

Similarly, it is easy to show the energy consumption does not increase by the above modification. This gives the proof for $\DDC_\N$ and $\DDC_\E$. 
\end{proof}

For $\DDC$, we can show that handovers at interior points of edges cannot
reduce the energy consumption. If two agents carry the package consecutively
over parts of the same edge, letting one of the two agents carry the package
over those parts can be shown not to increase the energy consumption.

\begin{restatable}{lemma}{lemeqMinCSESN}
\label{lem:eqMinC-SE-SN}
For any instance of $\DDC_\N$ and $\DDC_\E$ (with or without initial positions), there is a solution
that is simultaneously optimal for both problems. In other words, there is an optimal solution for $\DDC_\E$ in which all handovers of the package take place at nodes of the graph.
\end{restatable}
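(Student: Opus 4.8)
The plan is to prove the two problems share an optimum by converting an $\DDC_\E$-optimal schedule into one that uses only node handovers without increasing its energy. The easy direction is immediate: every node-handover schedule is also a valid edge-handover schedule, so the optimal $\DDC_\E$ cost is at most the optimal $\DDC_\N$ cost. It therefore suffices to transform an optimal $\DDC_\E$ solution into a feasible $\DDC_\N$ solution of no larger cost; such a schedule attains the common optimum and is optimal for both problems at once.

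By \cref{lem:minC-once} I may assume that in the optimal $\DDC_\E$ schedule each involved agent carries the package exactly once, so its empty movement starts at its fixed initial position $p_{a_i}$. Writing the handover points as $s=u_0,u_1,\dots,u_h=y$, the total energy is $\sum_i w_{a_i}\bigl[\dist_{a_i}(p_{a_i},u_{i-1})+\dist_{a_i}(u_{i-1},u_i)\bigr]$. I would repeatedly pick an interior handover point $u_i$ lying strictly inside some edge $e=\{u,v\}$ and push it to an endpoint of $e$, or make it coincide with an adjacent handover point, without increasing the energy. Either outcome strictly decreases the number of interior handover points (either $u_i$ becomes a node, or a zero-length trip is removed), so this potential reaches zero after finitely many steps and all handovers end up at nodes.

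The core is a concavity analysis of the cost as a function of the position $t\in[0,\ell(e)]$ of $u_i$ along $e$, with all other handover points frozen. Only three terms depend on $t$: the deliveries $w_{a_i}\dist_{a_i}(u_{i-1},u_i)$ and $w_{a_{i+1}}\dist_{a_{i+1}}(u_i,u_{i+1})$, and the empty movement $w_{a_{i+1}}\dist_{a_{i+1}}(p_{a_{i+1}},u_i)$. Any agent that reaches an interior point of $e$ can traverse all of $e$, so for a source $z\notin\mathrm{int}(e)$ one has $\dist_a(z,u_i)=\min\{\dist_a(z,u)+t,\ \dist_a(z,v)+\ell(e)-t\}$, which is concave in $t$; since initial positions are nodes, this covers the empty-movement term, and it also covers a delivery term whenever its other endpoint does not lie on $e$. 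As a nonnegative combination of concave functions is concave, in this situation the cost is concave on $[0,\ell(e)]$ and is minimized at $t\in\{0,\ell(e)\}$, i.e.\ at a node, to which I move $u_i$.

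The main obstacle is the case where a neighbor ($u_{i-1}$ or $u_{i+1}$) also lies on $e$, because then the corresponding delivery distance equals $|t-t'|$ and is convex. I would resolve this by restricting the motion of $u_i$ to a subinterval of $[0,\ell(e)]$ that contains no neighbor position in its interior; over such an interval each same-edge delivery distance is $t-t'$ or $t'-t$, i.e.\ affine, so adding it preserves concavity and the minimizer is again at an interval endpoint. That endpoint is either a graph node (we are done with $u_i$) or a neighbor's position (so $u_i$ coincides with a neighbor and a zero-length trip is deleted); in both cases the number of interior handovers drops while the energy does not increase. Feasibility is never violated, since $\DDC$ imposes no timing constraints and every agent involved can traverse all of $e$. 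Iterating yields an $\DDC_\E$-optimal schedule with node handovers only, which is therefore also optimal for $\DDC_\N$, proving the claim.
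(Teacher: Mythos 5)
Your proof is correct and reaches the same conclusion as the paper, but the justification is genuinely different. The paper takes an optimal $\DDC_\E$ schedule minimizing the number of on-edge handovers, looks at the first two agents $a_1,a_2$ carrying the package on an edge, argues $a_2$ must approach the handover point from the far endpoint, and then shows by an explicit computation that assuming \emph{both} single-agent alternatives ($a_1$ carries the whole segment, or $a_2$ does) are strictly worse yields $2w_2<w_1<2w_2$, a contradiction; the without-initial-positions case is handled by a separate, simpler comparison of $w_1$ and $w_2$. Your concavity argument absorbs all of this uniformly: the observation that $\dist_a(z,\cdot)$ restricted to a breakpoint-free subinterval of $e$ is a minimum of affine functions of $t$ makes the endpoint-optimality automatic, including the ``which direction does $a_2$ approach from'' case split that the paper handles by hand, and your potential (number of interior handovers) gives a clean termination argument. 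Two small imprecisions you should fix: (1) when $u_{i-1}$ or $u_{i+1}$ lies on $e$, the same-edge distance is not simply $|t-t'|$ but $\min\{|t-t'|,\ t+d+\ell(e)-t',\ \ell(e)-t+d+t'\}$ where $d$ accounts for paths in $G_a$ leaving and re-entering $e$; on your subinterval this is a minimum of affine functions, hence concave rather than affine, which is all you need but should be stated; (2) in the without-initial-positions variant the paper allows $p_a$ to be an interior point of an edge, so ``initial positions are nodes'' does not cover that case --- either add $p_{a_{i+1}}$'s parameter as another breakpoint excluded from the subinterval, or note that without initial positions one may take $p_{a_{i+1}}=u_i$ so the empty-movement term is identically zero. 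With these patches your argument is complete and arguably more systematic than the paper's.
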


\begin{proof}
Consider an optimal solution of $\DDC_\E$ that minimizes the number of handovers that take place on edges of the graph.
Let~$e=\{u,v\}$ be an edge on which at least one handover takes place in that solution, and let the package travel from $u$ to~$v$.
Let $a_1$ and $a_2$ be the first two agents carrying the package on~$e$, with $a_i$ carrying the package over distance $\ell_i>0$ for $i=1,2$.
Let $v_1$ denote the point on $e$ at distance $\ell_1$ from $u$ and let $v_2$ denote the point on $e$ at distance $\ell_1+\ell_2$ from $u$.
Note that $v_2=v$ is possible.
Agent $a_1$ carries the package from $u$ to $v_1$, and agent $a_2$ carries the package from $v_1$ to $v_2$.

First, consider the problem variants with initial positions.
Agent~$a_2$
must reach $v_1$ coming from~$v$, because otherwise (if agent~$a_2$
reaches $v_1$ coming from $u$) we could let $a_2$ carry the package
from $u$ to $v_1$ and save $a_1$'s trip from $u$ to $v_1$, reducing
the energy consumption.
Therefore, $a_2$~travels distance $\ell_2$ twice: First to go from $v_2$ to $v_1$ without the package, then to carry the package from $v_1$ to $v_2$.
Hence, the two agents consume energy $w_1 \ell_1 + 2 w_2 \ell_2$ while traveling between $u$ and $v_2$.

Consider the following two alternatives where only one of the two agents is used to carry the package from $u$ to $v_2$:
Agent $a_1$ could bring the package from $u$ to $v_2$, consuming energy $w_1(\ell_1+\ell_2)$, or agent $a_2$ could travel to $u$ to pick up the package and then bring it to $v_2$, consuming energy $2 w_2(\ell_1+\ell_2)$.
We claim that at least one of these alternatives does not consume more energy than the original solution that uses both agents.
Assume otherwise. Then we have:
\begin{align*}
    w_1 \ell_1 + 2 w_2 \ell_2 &<  w_1(\ell_1+\ell_2) \\
    w_1 \ell_1 + 2 w_2 \ell_2 &<  2 w_2(\ell_1+\ell_2)
\end{align*}
These inequalities can be simplified to:
\begin{align*}
    2 w_2 \ell_2 &<  w_1\ell_2 \\
    w_1 \ell_1 &<  2 w_2\ell_1
\end{align*}
As $\ell_1,\ell_2>0$, we get that $2w_2<w_1$ and $w_1<2 w_2$, which is a contradiction.
Thus, we can remove the handover between $a_1$ and $a_2$ at $v_1$ and use only one of these two agents to transport the package from $u$ to $v_2$, without increasing the energy consumption.
This is a contradiction to our assumption that we started with an optimal solution that minimizes the number of handovers on edges.
Hence, there is an optimal solution to $\DDC_\E$ in which no handovers take place on edges, and that solution is also an optimal solution to $\DDC_\N$.

Now, consider the problem variants without initial positions.
The energy consumed by agents $a_1$ and $a_2$ to bring the package from $u$ to $v_2$ is $w_1\ell_1+w_2\ell_2$ (the algorithm can choose $v_1$ as the initial position of agent $a_2$). If $w_1\le w_2$, we can modify the solution and let $a_1$ carry the package from $u$ to $v_2$. If $w_1>w_2$, we can choose $u$ as initial position for $a_2$ and let $a_2$ carry the package from $u$ to $v_2$. Hence, we can remove the handover at $v_1$ without increasing the energy consumption, and the result follows as above.
\end{proof}

For the case that $G_a=G$ for all $a\in A$, it has
been observed in previous work \cite{DBLP:conf/mfcs/Bartschi0M18, DBLP:journals/jcss/CarvalhoEP21} that there is an optimal schedule
for $\DDT_\N$ and $\DDT_\E$ in which the velocities of the
involved agents are strictly increasing, and an optimal schedule
for $\DDC_\N$ and $\DDC_\E$ in which the consumption rates of the
involved agents are strictly decreasing. We remark that this
very useful property does not necessarily hold in our setting
with agent movement restrictions. This is the main reason
why the problem becomes harder, as shown by our hardness results
in Section~\ref{sec:hardness} and, even for path networks,
in Section~\ref{sec:line-hardness}.

\section{Hardness results}
\label{sec:hardness}%
In this section, we prove several hardness results for $\DDT$ and $\DDC$
via reductions from the $\NP$-complete \emph{$3$-dimensional matching problem} (3DM)~\cite{DBLP:books/fm/GareyJ79}.
All of them apply to the case where all agents have the same speed and the same energy consumption rate. We first present the construction of a \emph{base instance} of $\DDT$, which we then adapt to obtain the hardness results for $\DDT$ and $\DDC$.
The instances that we create have the property that
every edge $e$ of the graph is only in the set $E_a$ of a single agent~$a$.
Therefore, handovers on edges are not possible for these instances, and so
the variants of the problems with node handovers and with edge handovers are equivalent for these
instances.
Hence, we only need to consider the problem variant with node handovers in the proofs.



The problem 3DM is defined as follows:
Given are three sets $X,Y,Z$, each of size $n$,  and a set $\mathcal{F}\subseteq X\times Y \times Z$ consisting of $m$ triples. Each triple in $\mathcal{F}$
is of the form $(x, y, z)$ with $x \in X$, $y \in Y$, and $z \in Z$. The question is: Is there a set of $n$ triples in $\mathcal{F}$ such that each element of $X\cup Y\cup Z$ is contained in exactly one of these $n$ triples?
 
 Let an instance of 3DM be given by $X,Y,Z$ and $\mathcal{F}$.
 Assume $X=\{x_1,x_2,\ldots,x_n\}$,
 $Y=\{y_1,y_2,\ldots,y_n\}$ and $Z=\{z_1,z_2,\ldots,z_n\}$.
 Number the triples in $\mathcal{F}$ from $1$ to $m=|\mathcal{F}|$ arbitrarily,
 and let the $j$-th triple be $t_j=(x_{f(j)},y_{g(j)},z_{h(j)})$
 for suitable functions $f,g,h:[m]\rightarrow [n]$.
 We create from this a base instance $I$
 of $\DDT_\N$ that consists of $n$ selection gadgets
 and $3n$ agent gadgets. Carrying the package through a selection gadget
 corresponds to selecting one of the $m$ triples of $\mathcal{F}$. The $n$ selection
 gadgets are placed consecutively, so that the package travels through
 all of them (unless it makes a detour that increases the delivery time).
 For each element $q$ of
 $X\cup Y\cup Z$, there is an agent gadget containing the start position
 of a unique agent. The agent gadget for element $q$ ensures that the agent
 can carry the package on the edge corresponding to element $q$ on a path through a selection gadget if and only if that path corresponds to a triple that contains~$q$.
 If the instance of 3DM is a yes instance, then each of the agents from the agent gadgets only needs to transport the package on one edge.
 Otherwise, an agent must travel (with or without the package) from one selection gadget to another one, and the extra time consumed by this movement increases the delivery time.
 
 Formally, the instance of $\DDT_\N$ with graph $\bar{G}=(\bar{V}, \bar{E})$ is created as follows. See Figure~\ref{fig:grid} for an illustration.
 The vertex set $\bar{V}$ contains $3n$ \emph{agent nodes} and $n(3m+1)+1$ nodes in selection gadgets, a total of $4n+3mn+1$ nodes.
 The agent nodes comprise
 a node $x_i$ for each $x_i\in X$, a node $y_i$ for each $y_i\in Y$, and
 a node $z_i$ for each $z_i\in Z$.
 The node set of selection gadget~$i$, for $1\le i\le n$,
 is $\{s_i,s_{i+1}\}\cup\{v^{i,x}_j,v^{i,y}_j,v^{i,z}_j\mid 1\le j\le m\}$,
 a set containing $3m+2$ nodes. For $1\le i<n$, the node $s_{i+1}$ is contained
 in selection gadget $i$ and in selection gadget $i+1$.
 The path through the three nodes $v^{i,x}_j,v^{i,y}_j,v^{i,z}_j$ (in selection gadget $i$ for any $i$) represents the triple $t_j$ in $\mathcal{F}$.
 The package is initially located at node $s_1$ and must be delivered to node $s_{n+1}$.
 

The edge set $\bar{E}$ contains two types of edges: \emph{inner} edges, each of which connects two nodes in a selection gadget, and \emph{outer} edges, each of which connects an agent node and a node in a selection gadget.
For every $i\in[n]$ and $j\in[m]$, there are four inner edges $ \{s_i, v^{i,x}_j\}$, $\{v^{i,x}_j, v^{i,y}_j\}$, $\{v^{i,y}_j, v^{i,z}_j\}$ and $\{v^{i,z}_j, s_{i+1}\}$ of length~$0$.\footnote{We could set the lengths of these edges to a small $\epsilon>0$ if we wanted to avoid edges of length~$0$.}
So there are $4mn$ inner edges. The outer edges are as follows: For every $i \in [n]$ and $j \in [m]$, there are the three outer edges
$\{x_{f(j)},v^{i,x}_j\}$,
$\{y_{g(j)},v^{i,y}_j\}$,
$\{z_{h(j)},v^{i,z}_j\}$ of length $M$ for some $M>0$. (We can set $M=1$.)


 This completes the description of the graph ${\bar G} = ({\bar V}, {\bar E})$ with edge lengths.

\begin{figure}[t]
\centering
\includegraphics[width=\textwidth]{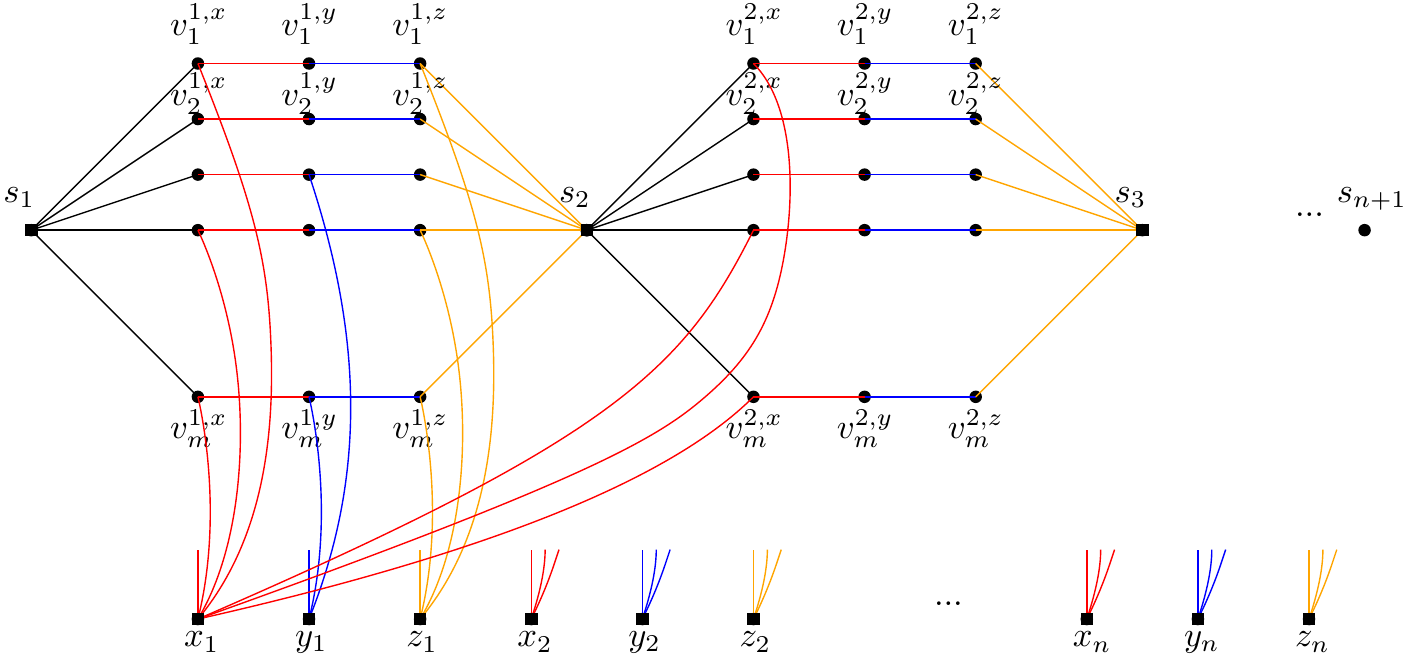}
\caption{The graph ${\bar G} = ({\bar V}, {\bar E})$ used for the inapproximability and hardness proofs for $\DDT$ and $\DDC$. The inner edges have length $0$, and the outer edges have length $M$. Each square node represents a drone's initial location.  
Each colored (red, blue, and orange) path represents a set of edges that can be visited by the same agent which is initially located at a location in $\{x_i,y_i,z_i\mid i\in [n]\}$. For each $i\in [n]$, there is an additional agent that is initially located at $s_i$ and which can visit edges $\{\{s_i, v^{i, x}_x\} \mid j\in [m]\}$.} 
\label{fig:grid}
\end{figure}

Now we define the set of agents. We have $4n$ agents with unit speed and unit energy consumption rate. Their initial locations are
$\{x_i, y_i, z_i, s_i \mid i\in [n]\}$.
The agents with initial location in $\{x_i,y_i,z_i\mid i\in [n]\}$ are
called \emph{element agents}.
For each $i\in [n]$, the node ranges and edge ranges of the agents with initial location with subscript $i$ are defined as follows:

\begin{itemize}
	\item The agent located at $x_i$ has node range $\{x_i, v^{i',x}_{j}, v^{i', y}_{j} \mid i'\in [n], j\in [m], f(j)=i \}$
	and edge range $\{\{x_i, v^{i',x}_{j}\}, \{v^{i',x}_{j}, v^{i', y}_{j}\}\mid i'\in [n], j\in [m], f(j)=i \}$.
	\item  The agent located at $y_i$ has node range $\{y_i, v^{i',y}_{j}, v^{i', z}_{j} \mid i'\in [n], j\in [m], g(j)=i \}$
	and edge range $\{\{y_i, v^{i',y}_{j}\}, \{v^{i',y}_{j}, v^{i', z}_{j}\}\mid i'\in [n], j\in [m], g(j)=i \}$.
	\item The agent located at $z_i$ has node range $\{z_i, v^{i',z}_{j}, s_{i'+1} \mid i'\in [n], j\in [m], h(j)=i \}$
	and edge range $\{\{z_i, v^{i',z}_{j}\}, \{v^{i',z}_{j}, s_{i+1}\}\mid i'\in [n], j\in [m], h(j)=i \}$.
	\item The agent located at $s_i$ has node range $\{ s_i,  v^{i,x}_{j}\mid j\in [m] \}$ and edge range $\{\{s_i, v^{i,x}_{j}\} \mid j\in [m] \}$.
	%
	%
\end{itemize}\medskip

We now show that the given instance of 3DM is a yes-instance if and only if
the constructed base instance of $\DDT_\N$ has an optimal schedule with delivery time~$M$.
%
%
%
%
First, assume that the given instance of 3DM is a yes-instance,
and let $\{t_{k_1},t_{k_2},\ldots,t_{k_n}\}\subseteq \mathcal{F}$ be a perfect matching. We let the package travel from $s_1$ to $s_{n+1}$ via the $n$ selection gadgets, using the path via nodes $v^{i,x}_{k_i}, v^{i,y}_{k_i}, v^{i,z}_{k_i}$ in selection gadget~$i$, for $i\in [n]$. This path consists of $4n$ edges. Each of the $4n$ agents needs to carry the package on exactly one edge of this path. All the element agents reach the node where they pick up the package at time~$M$. As all edges in the selection gadgets have length~$0$,
this shows that the package reaches $s_{n+1}$ at time~$M$. It is clear that this
solution is optimal because at least one element agent must take part in the delivery and cannot pick up the package before time $M$.

Now, assume that the base instance of $\DDT_\N$ admits a schedule with delivery time $M$. It is not hard to see that
the schedule must be of the above format, using each agent on exactly one edge in one selection gadget. This is because it takes time $2M$ for an element agent to move from one selection gadget to another one via its initial location.

This shows that there is a schedule with delivery time $M$ if and only if
the given instance of 3DM is a yes-instance. Furthermore, as any element
agent needs time $M$ to reach a pick-up point in a selection gadget and time $2M$ to move to
a different selection gadget (with or without the package), it is clear that
the optimal schedule will have length at least $3M$ if the given instance
of 3DM is a no-instance. This already shows that $\DDT_\N$ is $\NP$-hard and does
not admit a polynomial-time approximation algorithm with approximation
ratio smaller than~$3$ unless $P=\NP$, but we can strengthen the inapproximability
result by concatenating multiple copies of the base instance, as we show in the following theorem. 


\begin{restatable}{theorem}{LBminT} 
\label{the:lowerboundminT}
 For any constant $\epsilon> 0$, it is $\NP$-hard to approximate $\DDT_\N$ (or $\DDT_\E$) within a factor of $O(\min\{n^{1-\epsilon}, k^{1-\epsilon}\})$
 even if all agents have the same speed.
\end{restatable}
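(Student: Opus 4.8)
The plan is to amplify the constant gap of the base instance into a polynomial one by concatenating many independent copies of it. Given the $3$DM instance (with parameters $n$ and $m$) and a constant $\epsilon>0$, I would take $r:=n^{d}$ copies of the base graph $\bar G$, for a constant $d=d(\epsilon)$ fixed at the end, each copy using a fresh set of nodes and a fresh set of unit-speed agents, and glue them into a line by identifying the destination $s_{n+1}$ of copy $\ell$ with the start $s_1$ of copy $\ell+1$; the package must travel from $s_1$ of the first copy to $s_{n+1}$ of the last. Every edge still lies in the edge-range of exactly one agent, so node and edge handovers coincide and it suffices to reason about $\DDT_\N$, the bound for $\DDT_\E$ being identical. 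Write $N=\Theta(rmn)$ and $K=\Theta(rn)$ for the number of nodes and agents of the constructed instance (these are the quantities $n$ and $k$ of the theorem); both are polynomial in the $3$DM size for constant $d$, so the reduction is polynomial.

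For the yes-direction I would invoke the already-established property of the base instance: if $3$DM is a yes-instance, then in each copy every element agent can reach its matched pick-up node by time $M$. Since the copies have disjoint agent sets, all of them can preposition simultaneously, and because every inner edge has length $0$ the package flows through each selection gadget the instant it arrives. The first element agent is in place at time $M$, so the package is delivered at time exactly $M$, which is also optimal.

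The hard part is the no-direction, where I must show that the per-copy penalties compound rather than overlap. The key lemma I would prove is that, if $3$DM is a no-instance, then whenever the package enters a copy at time $E$ it cannot leave that copy before time $E+2M$. I would prove this by a dichotomy on the package's route through the copy. Each element node is incident only to outer edges, so it is a dead end; hence any deviation of the package from the pure inner-edge triple paths must enter and leave an element node and therefore traverse at least two outer edges, costing the package at least $2M$ of travel time (outer edges have length $M$ and all agents have unit speed) and giving $E+2M$ at once. Otherwise the package follows one triple per selection gadget, and since no perfect matching exists some element agent of this copy must carry the package in two distinct gadgets (the edge $\{v^{i,x}_j,v^{i,y}_j\}$ lies only in that agent's range); this relocation cannot be performed before time $E$, because each carry requires the agent to hold the package and hence to act after the package has entered the copy, and moving between the two relevant branches of that agent's star-shaped subgraph takes time $2M$. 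Either way the package leaves no earlier than $E+2M$. Applying the lemma to all $r$ copies in turn (the package necessarily crosses them in order, backtracking only wasting time) yields a delivery time of at least $2rM$, so the optimum is $M$ in the yes-case and at least $2rM$ in the no-case, a gap of $2r$.

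Finally I would fix $d$. As $K\le N$, we have $\min\{N^{1-\epsilon},K^{1-\epsilon}\}=K^{1-\epsilon}=\Theta(n^{(d+1)(1-\epsilon)})$, whereas the gap is $2r=\Theta(n^{d})$. Their ratio is $\Theta(n^{(d+1)\epsilon-1})$, which tends to infinity as soon as $d>1/\epsilon-1$; hence choosing $d=\lceil 1/\epsilon\rceil$ makes the gap exceed every constant multiple of $K^{1-\epsilon}$, and therefore of $\min\{N^{1-\epsilon},K^{1-\epsilon}\}$, for all sufficiently large instances. Consequently a polynomial-time algorithm approximating $\DDT_\N$ (or $\DDT_\E$) within $O(\min\{N^{1-\epsilon},K^{1-\epsilon}\})$ would decide $3$DM, so no such algorithm exists unless $P=\NP$. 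The main obstacle is the compounding lemma of the previous paragraph; the yes-case and the final arithmetic are routine.
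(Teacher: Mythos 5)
Your proposal is correct and follows essentially the same route as the paper's proof: concatenate polynomially many copies of the base instance with fresh agents, show the optimum is $M$ in the yes-case while each copy forces an additional delay of $2M$ in the no-case, and choose the number of copies large enough that the resulting gap dominates $\min\{n^{1-\epsilon},k^{1-\epsilon}\}$. The only differences are cosmetic: the paper takes $(4n+3mn)^c$ copies (which also yields the $n^{1-\epsilon}$ and $k^{1-\epsilon}$ bounds individually), whereas you take $n^{\lceil 1/\epsilon\rceil}$ copies and route everything through $k^{1-\epsilon}$, and your ``compounding lemma'' is a more explicit write-up of the paper's terser per-copy argument.
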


\begin{proof}
Consider the base instance $\bar{G}$ constructed from an instance $I$ of 3DM with
$|X|=|Y|=|Z|=n$ and $|\mathcal{F}|=m$.
To obtain a stronger inapproximability result for $\DDT_\N$,
we concatenate $q$ copies
of $\bar{G}$, for a suitable choice of $q$ (polynomial in $n$ and $m$).
The node $s_{n+1}$ of each copy of $\bar{G}$ (except the last) is identified with
the node $s_{1}$ of the next copy of $\bar{G}$. The package is initially located at
the node $s_1$ of the first copy and needs to be delivered to node $s_{n+1}$
of the last copy. We call the initial location of the package~$s$ (the node $s_1$ of the first copy) and
the target location~$y$ (the node $s_{n+1}$ of the last copy). The resulting instance of $\DDT_\N$ has
$N=(4n + 3mn)q + 1$ nodes and $k=4nq$ agents.
Let $G=(V,E)$ denote the graph of the resulting instance.

If the instance $I$ of 3DM admits a perfect matching, then there is an optimal schedule for
the constructed instance of $\DDT_\N$ that has delivery time $M$: The element agents in all copies
of $\bar{G}$ reach the node at which they pick up the package at time~$M$.
Otherwise, the best schedule for the constructed instance of $\DDT_\N$
has delivery time at least $M+q \cdot 2M=(1+2q)M$: It takes time $M$ for the
first element agent to reach the location where it pick up the package; furthermore,
in each of the $q$ copies of $\bar{G}$, at least one agent will have to carry the package
over edges in two different selection gadgets, adding $2M$ to the delivery time.
This shows that it is $\NP$-hard to obtain an approximation ratio of at most~$2q$.

We choose $q=(4n+3mn)^c$ for an arbitrarily large constant~$c$.
Then the constructed instance of $\DDT_\N$ has
$N=(4n+3mn)q+1=(4n+3mn)^{c+1}+1$ nodes
and $k=4nq=4n(4n+3mn)^c$ agents.
As $q=(4n+3m)^c=(N-1)^{c/(c+1)}=\Omega(N^{1-1/(c+1)})$,
this shows that there is no $O(N^{1-\epsilon})$-approximation
algorithm unless $P=\NP$.
As $q=(4n+3m)^c\ge k^{c/(c+1)}=k^{1-1/(c+1)}$,
there is also no $O(k^{1-\epsilon})$-approximation algorithm unless $P=\NP$.
\end{proof}

To show $\NP$-hardness for $\DDC$, we adapt the base instance
by numbering the columns of the selection gadgets to which outer
edges are attached from $1$ to $3n$ (from left to right) and
letting the outer edges attached to column $i$ have length
$2^{3n+1-i}$. If the given instance of 3DM is a yes-instance,
exactly one outer edge attached to each column will be used,
for a total energy consumption of $2^{3n+1}-2$. Otherwise,
some column will be the first column in which an outer edge
is used twice, and the total energy consumption will be
at least $2^{3n+1}$.

\begin{restatable}{theorem}{LBminC} 
\label{the:lowerboundminC}
The problems $\DDC_\N$ (and $\DDC_\E$) are $\NP$-hard even if all
agents have the same energy consumption.
\end{restatable}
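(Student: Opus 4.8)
The plan is to reduce from 3DM using the same base graph $\bar{G}$ as for $\DDT_\N$, reweighting the outer edges as indicated: number the $3n$ columns carrying outer edges from left to right, so that columns $1,\dots,3n$ correspond to the $x$-, $y$- and $z$-columns of the selection gadgets in order, and give every outer edge in column $c$ length $2^{3n+1-c}$. All agents keep unit energy consumption rate. As in the base instance, every edge lies in the edge-range of exactly one agent, so edge handovers are impossible and $\DDC_\N$ and $\DDC_\E$ coincide on these instances (consistent with Lemma~\ref{lem:eqMinC-SE-SN}); hence it suffices to argue about $\DDC_\N$. Although the edge lengths are exponentially large in value, each has $O(n)$ bits, so the construction runs in polynomial time.

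For the forward direction, suppose the 3DM instance has a perfect matching $\{t_{k_1},\dots,t_{k_n}\}$. I route the package through gadget $i$ along the inner path $s_i\to v^{i,x}_{k_i}\to v^{i,y}_{k_i}\to v^{i,z}_{k_i}\to s_{i+1}$. Every inner edge has length $0$, and each element agent is used exactly once: it makes a single empty traversal of one outer edge to reach its pick-up node and then carries the package along a length-$0$ inner edge. Because the matching is perfect, exactly one outer edge is traversed in each column, so the total energy is $\sum_{c=1}^{3n} 2^{3n+1-c}=2^{3n+1}-2$.

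For the converse I will show that every feasible schedule has energy at least $2^{3n+1}$. First I reduce to a canonical form: by Lemmas~\ref{lem:minT-once} and~\ref{lem:eqMinC-SE-SN} I may assume that each agent carries the package contiguously and that all handovers are at nodes, and by a shortcutting/exchange argument (deleting detour loops, which only use extra positively-weighted outer edges) I may assume that the package follows an inner-edge path through exactly one triple $t_{j_i}$ in each gadget $i$; none of these transformations increases the energy. For such a schedule let $a_c$ denote the total number of traversals (empty or loaded) of outer edges in column $c$, so the energy equals $\sum_{c=1}^{3n} a_c\cdot 2^{3n+1-c}$. Each of the three columns of gadget $i$ is used by the agent carrying the corresponding inner edge, so $a_c\ge 1$ for every $c$. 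Since the instance is a no-instance, the chosen triples cannot form a perfect matching, so some element $q$ occurs in two of them, in gadgets $i_1<i_2$; the single agent for $q$ can move between its two pick-up nodes only by passing through its home node, so it traverses the outer edge incident to gadget $i_1$ at least twice (once to enter and once to leave). Hence $a_{c^*}\ge 2$ for that column $c^*$, and
$$\mathrm{energy}\ \ge\ \sum_{c=1}^{3n}2^{3n+1-c}+2^{3n+1-c^*}\ =\ (2^{3n+1}-2)+2^{3n+1-c^*}\ \ge\ 2^{3n+1},$$
using $c^*\le 3n$, so that $2^{3n+1-c^*}\ge 2$.

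Combining the two directions, the 3DM instance is a yes-instance if and only if the constructed $\DDC_\N$ instance admits a schedule of energy at most $2^{3n+1}-2$; since any no-instance has strictly larger optimum, this establishes $\NP$-hardness of $\DDC_\N$, and hence of $\DDC_\E$. The step I expect to require the most care is the reduction to canonical form in the converse direction: I must verify that rerouting the package onto a single inner path per gadget and straightening agent detours never decreases the energy, so that the clean column-doubling bound applies to all feasible schedules rather than only to the intended ones.
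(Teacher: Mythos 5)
Your proof matches the paper's: the same reweighting of the base instance's outer edges by column (powers of two decreasing from left to right), the same yes-instance energy $2^{3n+1}-2$, and the same no-instance lower bound of $2^{3n+1}$ obtained by forcing some column's outer edges to be traversed at least twice; your accounting (every column at least once, plus one doubled column) is equivalent to the paper's (the sum over the columns preceding the doubled one, plus twice that column's weight). The only nit is in your closing remark: the canonicalization must never \emph{increase} the energy rather than never decrease it, which is the direction you in fact use correctly in the body of the argument.
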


\begin{proof}
To prove the $\NP$-hardness of $\DDC_\N$, we construct the graph $G = (V, E)$ from the base instance $\bar{G}$ by changing the lengths of the outer edges:
For every $i \in [n]$ and $j \in [m]$, we let the outer edges $(x_{f(j)}, v^{i,x}_{j})$, $(y_{g(j)},v^{i,y}_j)$ and $(z_{h(j)},v^{i,z}_j)$ have lengths $2^{3n+1-3(i-1)-1}$,
$2^{3n+1-3(i-1)-2}$ and
$2^{3n+1-3(i-1)-3}$, respectively.
Intuitively, there are $3n$ columns of vertices in selection gadgets to which outer edges are attached (cf.~Figure~\ref{fig:grid}), and the lengths of the outer edges are powers of two that decrease column by column. The outer edges attached to the leftmost column (closest to $s_1$) have length $2^{3n}$, and those attached to the rightmost column (closest to $s_{n+1}$) have length~$2$. The outer edges attached to the $k$-th column, $1\le k\le 3n$, have length $2^{3n+1-k}$.
  
If the instance of 3DM admits a perfect matching, then there is a schedule for the constructed instance of $\DDC_\N$ in which every agent carries the package over exactly one edge of a path in exactly one selection gadget. That schedule has total energy consumption $2^{3n}+2^{3n-1}+\cdots +2=2^{3n+1}-2$.
Otherwise, at least one agent must carry the package over edges in at least two different selection gadgets. Assume that the first agent who does this picks up the package in column~$k$. Then that agent will have energy consumption at least $2\cdot 2^{3n+1-k}$ because it needs to go back to its initial location after picking up the package. The previously used agents (who brought the package from $s_1$ to a node in column~$k$) have energy consumption at least $2^{3n+1-1}+2^{3n+1-2}+\cdots +2^{3n+1-(k-1)}$.
The total energy consumption is at least
$\left(\sum_{t=1}^{k-1}2^{3n+1-t}\right)+2\cdot 3^{3n+1-k}
=2^{3n+1}$. We conclude that there is a schedule with total energy consumption
at most $2^{3n+1}-2$ if and only if the instance of 3DM is a yes-instance.
\end{proof}

Finally, for the problem variants without initial positions,
we observe that the base instance admits a delivery schedule
with delivery time~$0$ and energy consumption~$0$ if and only
if the given instance of 3DM is a yes-instance. This shows
that there cannot be any polynomial-time approximation algorithm
for any of the $\DDT$ and $\DDC$ problem variants without initial
positions. 


\begin{restatable}{theorem}{LBminTni} 
\label{the:lowerboundminTni}
 For $\DDT_\N$ (or $\DDT_\E$) without initial positions, it is
 $\NP$-hard to distinguish between instances with delivery time~$0$
 and instances with delivery time greater than~$0$, even if all
 agents have the same speed. Therefore,
 there is no polynomial-time approximation algorithm for these
 problems (with any finite ratio), unless $P=\NP$.
 \end{restatable}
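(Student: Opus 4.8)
The plan is to reuse the base instance $\bar{G}$ directly (a single copy, with no concatenation), but now to interpret it as an instance of the without-initial-positions variant, and to show that it admits a schedule of delivery time $0$ if and only if the underlying 3DM instance is a yes-instance. As in the earlier reductions, every edge of $\bar{G}$ lies in the edge-range of a single agent, so node and edge handovers coincide and it suffices to argue about $\DDT_\N$. The whole argument then reduces to a clean $0$-versus-positive gap, from which both the distinguishing hardness and the non-approximability follow.

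For the forward direction, suppose the 3DM instance has a perfect matching $\{t_{k_1},\dots,t_{k_n}\}$. I would route the package from $s_1$ to $s_{n+1}$ along the inner-edge path through $v^{i,x}_{k_i},v^{i,y}_{k_i},v^{i,z}_{k_i}$ in each gadget $i$; all $4n$ of these edges have length $0$. The point specific to the without-initial-positions setting is that we may place the initial position of each involved agent exactly at its pick-up node, since each such node lies in that agent's node-range. Because the chosen triples form a perfect matching, every element agent is used on exactly one edge, so these placements are mutually consistent. Every carried edge and every empty movement then has length $0$, whence the delivery time is $0$ (and, incidentally, the energy consumption is $0$ as well).

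The reverse direction is the main obstacle, and is where the freedom to choose initial positions must be handled carefully. Suppose some schedule has delivery time $0$. Then every carried edge has length $0$, so the carrying route is an inner-edge path from $s_1$ to $s_{n+1}$; by the structure of the inner edges this route must cross each gadget $i$ along a complete triple path for some index $j_i$, and each of the four edges of such a crossing can be carried only by the unique agent whose edge-range contains it (the gadget agent $s_i$ and the element agents $x_{f(j_i)}$, $y_{g(j_i)}$, $z_{h(j_i)}$). The crucial step is to show that no element agent can serve crossings in two different gadgets. The subgraph $G_a$ of an element agent $a$ is a spider in which every length-$0$ edge sits in a distinct leg, and any two such length-$0$ edges are joined within $G_a$ only by passing through the central element node along two outer edges of length $M$; hence carrying the package in two different gadgets forces an empty movement of length at least $2M$, making the delivery time at least $2M>0$ regardless of where the agent starts. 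Therefore the $n$ crossing triples $t_{j_1},\dots,t_{j_n}$ use $3n$ pairwise distinct element agents, i.e.\ the values $f(j_i)$, $g(j_i)$, $h(j_i)$ exhaust $X$, $Y$, $Z$, so $\{t_{j_1},\dots,t_{j_n}\}$ is a perfect matching.

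Finally I would record the gap quantitatively: in a no-instance every feasible schedule either carries the package over an outer edge (delivery time $\ge M$) or reuses an element agent (delivery time $\ge 2M$), so its delivery time is at least $M>0$, whereas a yes-instance allows delivery time $0$. Deciding whether the optimum is $0$ or positive is thus $\NP$-hard. Consequently, a polynomial-time approximation algorithm with any finite ratio $\rho$ would output a schedule of value at most $\rho\cdot 0=0$ on yes-instances and a value at least $M>0$ on no-instances, letting us solve 3DM in polynomial time; hence no such algorithm exists unless $P=\NP$. The same instance and argument apply verbatim to $\DDT_\E$.
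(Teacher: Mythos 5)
Your proposal is correct and follows essentially the same route as the paper: reuse the single base instance, place each agent at its pick-up node to get delivery time $0$ in the yes case, and observe that in the no case a reused element agent must make an empty movement of length $2M$ between gadgets (or the package must cross an outer edge), forcing a strictly positive delivery time. You simply spell out the reverse direction in more detail than the paper does, which is a welcome but not substantively different elaboration.
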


\begin{proof}
For $\DDT_\N$ without initial positions, consider the base
instance. If the given instance of 3DM is a yes-instance,
there is a schedule in which the package travels along
a path of $4n$ edges, these edges have length~$0$, and each agent
needs to carry the package over exactly one of these $4n$ edges.
Hence, we can place each agent at the start of the edge over
which it must carry the package, and the package
can be delivered in time~$0$. If the instance of 3DM is
a no-instance, at least one element agent will have to
carry the package on edges in two different selection gadgets,
and that agent will require time $2M$ to travel to the second
selection gadget. Thus, the delivery time will be at least $2M$.
\end{proof}

\begin{restatable}{theorem}{LBminCni} 
\label{the:lowerboundminCni}
For $\DDC_\N$ (or $\DDC_\E$) without initial positions, it is
$\NP$-hard to distinguish between instances with energy consumption~$0$
and instances with energy consumption greater than~$0$, even if all
agents have the same energy consumption rate. Therefore,
there is no polynomial-time approximation algorithm for these
problems (with any finite ratio), unless $P=\NP$.
\end{restatable}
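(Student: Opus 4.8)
The plan is to reuse the base instance $\bar G$ verbatim and mirror the proof of Theorem~\ref{the:lowerboundminTni}, simply replacing the bookkeeping on delivery time by bookkeeping on energy consumption. Recall that in the base instance all inner edges have length~$0$, all outer edges have length $M>0$, every agent has unit energy consumption rate, and every edge lies in the edge-range of a single agent, so node and edge handovers coincide; hence it suffices to argue about $\DDC_\N$. I will establish the dichotomy that the base instance (in the variant without initial positions) admits a schedule of energy consumption~$0$ if and only if the given 3DM instance is a yes-instance.

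For the forward direction, suppose the 3DM instance admits a perfect matching $\{t_{k_1},\ldots,t_{k_n}\}$. As in the base-instance analysis, I route the package from $s_1$ to $s_{n+1}$ along the $4n$ inner edges of the triple-paths $v^{i,x}_{k_i},v^{i,y}_{k_i},v^{i,z}_{k_i}$, and assign each of the $4n$ agents to the unique inner edge it must carry. Since initial positions are free to choose, I place each agent at the tail of that inner edge (a node in its node-range), so it performs no empty movement and its delivery trip has length~$0$; the total energy consumption is therefore~$0$. For the converse, I argue that a zero-energy schedule can never traverse an outer edge (length $M$), neither loaded nor empty. Consequently the package must travel from each $s_i$ to $s_{i+1}$ using only inner edges, which forces it through exactly one triple-path per selection gadget, and each element agent must remain inside a single gadget, since within an element agent's subgraph the clusters belonging to distinct selection gadgets are connected only through its designated element node via outer edges. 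Thus each element of $X\cup Y\cup Z$ is covered by at most one of the $n$ selected triples; as there are $n$ triples and $n$ elements of each type, each element is covered exactly once, i.e.\ the selected triples form a perfect matching. (Lemma~\ref{lem:minC-once} may be invoked beforehand to assume each involved agent is used exactly once, streamlining the coverage count.)

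Combining the two directions shows that distinguishing energy consumption~$0$ from energy consumption~$>0$ is $\NP$-hard. The final inapproximability claim then follows immediately: any polynomial-time approximation algorithm with a finite ratio would be forced to output a value of~$0$ precisely when the optimum is~$0$, and hence would decide 3DM. I expect the only delicate step to be the converse (no-instance) direction: the main obstacle is arguing that the freedom to choose initial positions cannot help an element agent operate in two different selection gadgets without paying for an outer edge. This hinges on the structural fact that an element agent's subgraph joins distinct selection gadgets solely through its single element node, so any movement between gadgets---with or without the package---must use at least one length-$M$ outer edge and therefore incurs strictly positive energy.
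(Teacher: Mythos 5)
Your proposal is correct and follows essentially the same route as the paper's proof: reuse the base instance, place each agent at the tail of its designated inner edge to achieve energy~$0$ in the yes-case, and observe that in the no-case some agent must traverse an outer edge of length $M$, incurring positive energy. Your write-up is in fact more detailed than the paper's (which merely states the argument is analogous to Theorem~\ref{the:lowerboundminTni}), and your key structural observation---that an element agent's subgraph connects its inner edges only through its element node via outer edges---is exactly the fact the paper relies on implicitly.
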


\begin{proof}
We again use the
base instance, and the argument is analogous to the proof of
Theorem~\ref{the:lowerboundminTni}: If the
instance of 3DM is a yes-instance, we can place each
agent at the start of the edge over which it needs to
carry the package, and the agent can carry the package
over that edge with energy consumption~$0$. Otherwise,
at least one agent will have to travel between two different
selection gadgets, consuming energy $2M$.
\end{proof}

\paragraph{Remark.}  
The results of Theorems~\ref{the:lowerboundminTni} and \ref{the:lowerboundminCni}
arise because we allow zero-length edges and it is $\NP$-hard to
decide whether an optimal solution with objective value~$0$ exists.
If we were to require strictly positive edge lengths,
it would be possible to obtain approximation algorithms with
ratios that depend on the ratio of maximum to minimum edge length.

\section{Approximation algorithm for the \texorpdfstring{$\DDT$}{DDT} problem}
\label{sec:alg_DDT}

First, we can show that an approach based on Dijkstra's algorithm
with time-dependent edge transit times (similar to~\cite{DBLP:journals/jcss/CarvalhoEP21}) yields an optimal algorithm for the variant of $\DDT_\N$ where the solution can use an arbitrary number of copies of each agent (each copy of agent $a\in A$ has the same initial location~$p_a$). 


We start by introducing some notation used in the algorithm. 
For any edge $\{u,v\}\in E_a$ for some $a\in A$, we denote by $\eT_a(v, u\prec v)$ 
the earliest time for the package to arrive at $v$ if the package is carried over the edge $\{u, v\}$  by a copy of agent~$a$. 
In addition, we use $\eT(v, u\prec v)$ to denote the earliest time for the package to arrive at node $v$ if the package is carried over the edge $\{u,v\}$  by some agent, i.e., $\eT(v, u\prec v)=\min \{\eT_a(v, u\prec v)\mid \{u,v\} \in E_a, a\in A\}$.  
For every $v\in V$, we use $\eT(v)$ to denote the earliest arrival time for the package at node $v$, i.e., $\eT(v)=\min \{\eT(v, u\prec v)\mid \{u,v\} \in E\}$. Note that the package is initially at location $s$ at time $0$, i.e., $\eT(s)=0$.    
Given a node~$v$, we denote by $\mathcal{S}(v)$ the schedule for carrying the package from $s$ to $v$, i.e., $\mathcal{S}(v)=\{(a_1, s, u_1),  (a_2, u_1, u_2), \ldots, (a_h, u_{h-1}, v)\}$. 

\begin{algorithm}[!tbh] 
\KwData{Graph $G=(V,E,\ell)$; package source node $s$ and target node $y$; agent $a$ with velocity $v_a$ and initial location $p_a$ for $a\in A$}
\KwResult{earliest arrival time for package at target location $y$, i.e., $\eT(y)$}
\Begin{compute $\dist_a(p_a,v)$ for $a\in A$ and $v\in V_a$

  
  $\eT(s)\gets 0$
  
  $\eT(v)\gets \infty$ for all $v\in V \setminus \{s\}$
  
  
  $L(v)\gets \emptyset$ for all $v\in V$ \tcp{agent bringing the package to $v$} 
  
  $\proc(v)\gets 0$ for all $v\in V$ \tcp{all nodes $v$ are unprocessed}
  
  $\prev(v)\gets \emptyset$ for all $v\in V$ \tcp{previous node on optimal package path to $v$}
  
  $Q\gets \{s\}$ \tcp{priority queue of pending nodes}
  
  \While{$Q\neq  \emptyset $}{
  
        $u\gets \arg \min \{\eT(v)\mid v\in Q\}$  \tcp{node with minimum arrival time in $Q$}
        
        $Q\gets Q \setminus \{u\}$ and $\proc(u)\gets 1$
        
        \If{$u=y$}{
              break}
              
        $t\gets \eT(u)$  \tcp{arrival time when package reaches $u$}
        
          \For{neighbors $v$ of $u$ with $\proc(v)=0$ and $A(u,v)\neq \emptyset$}{
           $\{\eT(v, u\prec v), L(v, u\prec v)\}\gets$ \textsc{NeiDelivery}$(u,v,t)$
             
             \If{$\eT(v, u\prec v)<\eT(v)$}{
                  $\eT(v)\gets \eT(v, u\prec v)$
                  
                  $L(v)\gets L(v, u\prec v)$ 
                  
                  $\prev(v)\gets \{u\}$ 
                  
                  \If{$v\notin Q$}{
                  $Q\gets Q \cup  \{v\}$ with earliest arrival time $\eT(v)$}
                }}

    }
    \Return $\eT(y)$
  } 
 \caption{Algorithm for $\DDT$}
 	\label{alg:minT}
\end{algorithm}

Our algorithm adapts the approach of a time-dependent Dijkstra's algorithm~\cite{DBLP:conf/mfcs/Bartschi0M18,DBLP:journals/jcss/CarvalhoEP21}.
Algorithm~\ref{alg:minT} shows the pseudo-code.
For each node $v\in V$, we maintain a value $\eT(v)$ that represents the current upper bound on the earliest time when the package can reach $v$ (line 3--4). Initially, we set the earliest arrival time for $s$ to $0$ and for all other nodes to $\infty$.
We maintain a priority queue $Q$ of nodes $v$ with finite $\eT(v)$ value that have not
been processed yet (line 8).
In each iteration of the while-loop, we process a node $u$ in $Q$ having the earliest arrival time (line 10), where $u=s$ in the first iteration. For each unprocessed neighbor node $v$ of~$u$, we calculate the earliest arrival time $\eT(v,u\prec v)$ at node $v$ if the package is carried over the edge between $u$ and $v$ by some agent
(and we store the identity of that agent in $L(v,u\prec v)$),
by calling the subroutine \textsc{NeiDelivery}$(u,v,t)$ (Algorithm~\ref{alg:FNDforN}). If this
earliest arrival time is smaller than $\eT(v)$, then $\eT(v)$ is updated and $v$ is inserted into $Q$ (if it is not yet in $Q$) or its priority reduced to the new value of $\eT(v)$.
The algorithm terminates and returns the value $\eT(y)$ when the node being processed is $y$ (line 12). 
The schedule $\mathcal{S}=\mathcal{S}(y)$ can be constructed in a backward manner because we store for each node $v$ the involved agent $L(v)$ and its predecessor node $\prev(v)$ (line 20 and 21). 



To compute $\eT(v, u\prec v)$ in line 17 of Algorithm~\ref{alg:minT}, we call \textsc{NeiDelivery}$(u,v,t)$ (Algorithm~\ref{alg:FNDforN}): That subroutine calculates for each agent $a\in A$
with $\{u,v\}\in E_a$, i.e., for all $a\in A(u,v)$, the time when the package reaches $v$ if that agent picks it up at $u$
and carries it over $\{u,v\}$ to~$v$. The earliest arrival time at $v$ via the edge $\{u,v\}$
is returned as $\eT(v, u\prec v)$, and the agent $a^*$ achieving that arrival time is returned
as $L(v,u\prec v)$.


\begin{algorithm}[tb] 
\KwData{Edge $\{u,v\}$, arrival time for node $u$ $\eT(u)=t$, agents $A(u,v)$} 
\KwResult{$\eT(v, u\prec v)$}

\For{$a\in A(u,v)$}
 {  $\eT_a(v, u\prec v)=\max\{t, \frac{\dist_a(p_a, u)}{v_a}\} +\frac{\ell{(u,v)}}{v_a}$
   
    }
  
 $a^*=\arg \min\{\eT_a(v, u\prec v) \mid a\in A(u,v)\}$;

$\eT(v, u\prec v) \gets \eT_{a^*}(v, u\prec v)$

  $L(v, u\prec v)\gets a^*$ 

  \Return $\eT(v, u\prec v)$ and $L(v, u\prec v)$ 
 \caption{Algorithm \textsc{NeiDelivery}$(u,v,t)$ for $\DDT_\N$}
 	\label{alg:FNDforN}
\end{algorithm}

\begin{lemma}
\label{lem:minT-NS_statement}
The following statements hold for the schedule found by Algorithm~\ref{alg:minT}.
\begin{enumerate}[label=(\roman*)] 
\item  It may happen that an agent picks up and drops off the package more than once. Each time an agent $a$ carries the package over a path of consecutive edges, a copy of the agent starts at $p_a$, travels to the start node $u$ of the path, picks up the package at time $\max\{\eT(u),\frac{\dist_a(p_a,u)}{v_a}\}$,
and then carries the package over the edges of the path.
 
\item The package is carried to each node $v\in V$ at most once, and thus the schedule carries the package over at most $|V|-1$ edges. 

\end{enumerate} 
\end{lemma}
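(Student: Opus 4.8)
The plan is to prove (ii) first, purely from the Dijkstra structure of Algorithm~\ref{alg:minT}, and then establish (i) by decomposing the constructed package path into maximal runs carried by a single agent.

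For (ii), I would argue that the algorithm is a faithful Dijkstra computation. Each node is extracted from $Q$ at most once (after which $\proc(u)\gets 1$), and relaxations only touch unprocessed neighbors, so a node is never revisited. Moreover, relaxation never produces a key below the key of the node triggering it: for the relaxed value we have $\eT(v, u\prec v)=\max\{\eT(u),\dist_a(p_a,u)/v_a\}+\ell(u,v)/v_a\ge \eT(u)$. Hence extracted nodes have nondecreasing $\eT$-values, and each pointer $\prev(v)=u$ is set while $u$ is being processed and $v$ is still unprocessed, so $u$ strictly precedes $v$ in extraction order. The pointers $\prev(\cdot)$ therefore form an acyclic structure, and backtracking $y,\prev(y),\prev(\prev(y)),\ldots,s$ traces a \emph{simple} path. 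Consequently the schedule $\mathcal{S}(y)$ carries the package to each node at most once, and over at most $|V|-1$ edges.

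For (i), I would first note that the schedule records a single carrying agent $L(v)$ per node $v$ on this path; since the same agent can be the minimizer inside \textsc{NeiDelivery} for several, possibly non-adjacent, edges, an agent may appear more than once. This is permissible precisely because the variant under consideration allows arbitrarily many copies of each agent, each starting at $p_a$. I would then partition the package path into maximal runs of consecutive edges carried by a common agent, say $u=w_0,w_1,\ldots,w_r$ with $L(w_i)=a$ for all $i$ and all $\{w_{i-1},w_i\}\in E_a$. The claim is that a single copy of $a$, departing $p_a$, reaching $u$ and picking up the package at time $\max\{\eT(u),\dist_a(p_a,u)/v_a\}$, can carry it over the whole run and reach each $w_i$ exactly at the algorithm's value $\eT(w_i)$. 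I would prove this by induction on $i$: the base case is the definition of $\eT(w_1)$, and the step requires showing that the per-edge formula $\eT(w_{i+1})=\max\{\eT(w_i),\dist_a(p_a,w_i)/v_a\}+\ell(w_i,w_{i+1})/v_a$ collapses to $\eT(w_i)+\ell(w_i,w_{i+1})/v_a$.

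The crux, and the main obstacle, is exactly this collapse: one must rule out that a \emph{fresh} copy of $a$ coming directly from $p_a$ would reach $w_i$ before the continuing copy that already carries the package. This is where the triangle inequality in $G_a$ is needed. By the induction hypothesis, $\eT(w_i)=\max\{\eT(u),\dist_a(p_a,u)/v_a\}+\sum_{t=1}^{i}\ell(w_{t-1},w_t)/v_a$, while $\dist_a(p_a,w_i)\le \dist_a(p_a,u)+\dist_a(u,w_i)\le \dist_a(p_a,u)+\sum_{t=1}^{i}\ell(w_{t-1},w_t)$ since the run is a path in $G_a$. Dividing by $v_a$ gives $\dist_a(p_a,w_i)/v_a\le \eT(w_i)$, so $\max\{\eT(w_i),\dist_a(p_a,w_i)/v_a\}=\eT(w_i)$, the formula collapses as claimed, and the single-copy interpretation of each run is consistent with the arrival times computed by the algorithm. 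Assembling the runs over the simple path from (ii) then yields precisely the schedule described in (i).
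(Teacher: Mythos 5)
Your proposal is correct and follows the same overall route as the paper's own proof. Part (ii) is argued exactly as in the paper: each node is extracted from $Q$ at most once (the $\proc$ flag prevents re-insertion), and $\prev(v)$ is only ever set to a node that is currently being processed while $v$ is still unprocessed, so the predecessor pointers trace a simple $s$--$y$ path with at most $|V|-1$ edges. For part (i), however, you prove strictly more than the paper does. The paper merely observes that \textsc{NeiDelivery} may reselect an agent (hence fresh copies are needed) and asserts, without justification, that an agent used on consecutive edges ``can continue to carry the package'' without a new copy. You actually verify this: by induction along a maximal run $u=w_0,\ldots,w_r$ carried by agent $a$, the triangle inequality in $G_a$ gives $\dist_a(p_a,w_i)/v_a\le \eT(w_i)$ at every interior node, so the recurrence $\eT(w_{i+1})=\max\{\eT(w_i),\dist_a(p_a,w_i)/v_a\}+\ell(w_i,w_{i+1})/v_a$ collapses to $\eT(w_i)+\ell(w_i,w_{i+1})/v_a$, and the single-copy reading of each run reproduces the algorithm's arrival times exactly. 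This closes a small consistency gap that the paper leaves implicit, and it is worth keeping in your write-up.
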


\begin{proof}
For statement (i), note that \textsc{NeiDelivery}$(u,v,t)$ (Algorithm~\ref{alg:FNDforN}) does not take into account whether an agent in $A(u, v)$ has already been involved in the trips that bring the package to $u$ at time $\eT(u)=t$ or not. Thus, it may select an agent again, and so agents may be involved in the schedule more than once, which means that a fresh copy of the agent is used each time the agent is involved. If an agent $a$ is used on two consecutive edges, then it can continue to carry the package over the second edge after it arrives at the endpoint of the first edge, and we do not need to use a fresh copy of the agent for the second edge.

According to lines 10--11 of Algorithm~\ref{alg:minT}, the node $u$ being processed is removed
from $Q$ and never enters $Q$ again (because $\proc(u)$ is set to 1 in line 11, and only nodes
with $\proc(u)=0$ can be entered into $Q$ due to the condition in line~16).
The node $u$ can become a predecessor (stored in $\prev(v)$) of another node $v$ only
at the time when $u$ is being processed and the node $v$ has not yet been processed.
Therefore, the resulting schedule for the delivery of the package visits each node at most once.
This shows statement (ii).
\end{proof}

\begin{restatable}{lemma}{lemminTNSopt}
\label{lem:minT-NS_opt}
There is an algorithm that computes an optimal schedule in time
$O(k(m+n\log n))$
for $\DDT_\N$ under the assumption that an arbitrary number of copies of each agent can be used. The package gets delivered from $s$ to $y$ along
a simple path with at most $|V|-1$ edges.
\end{restatable}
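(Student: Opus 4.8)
The plan is to recognize Algorithm~\ref{alg:minT} as a time-dependent Dijkstra computation and to verify the FIFO (monotonicity) property that makes such a computation correct. First I would isolate the per-edge continuation time computed by \textsc{NeiDelivery}: for an edge $\{u,v\}$ and a package arrival time $t$ at $u$, set
\[
g_{u,v}(t)=\min_{a\in A(u,v)}\Bigl(\max\Bigl\{t,\tfrac{\dist_a(p_a,u)}{v_a}\Bigr\}+\tfrac{\ell(u,v)}{v_a}\Bigr).
\]
Because an arbitrary number of copies is available, a fresh copy of any $a\in A(u,v)$ may be dispatched from $p_a$, reach $u$ at time $\dist_a(p_a,u)/v_a$, wait for the package if necessary, and carry it to $v$; hence $g_{u,v}(t)$ is exactly the earliest time the package can reach $v$ across $\{u,v\}$ given that it is at $u$ at time $t$, matching line~2 of Algorithm~\ref{alg:FNDforN}. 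The key structural observation is that $g_{u,v}$ is non-decreasing in $t$, since $\max\{t,\cdot\}$ is. Letting $\eT^*(v)$ denote the true earliest arrival time, I would then record the Bellman optimality equation $\eT^*(s)=0$ and $\eT^*(v)=\min_{\{u,v\}\in E} g_{u,v}(\eT^*(u))$: any schedule reaching $v$ ends with a single agent carrying the package over some last edge $\{u,v\}$, the package is at $u$ no earlier than $\eT^*(u)$, and by monotonicity of $g_{u,v}$ it is never worse to have the package arrive at $u$ as early as possible.

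Next I would prove the standard Dijkstra invariant: whenever a node $u$ is extracted in line~10, its label satisfies $\eT(u)=\eT^*(u)$. The argument is by contradiction on the first extracted node $u$ with $\eT(u)>\eT^*(u)$. Taking an optimal schedule delivering the package to $u$ and letting $w$ be its first node not yet extracted, with predecessor $w'$ (already correctly labeled by minimality of $u$), the relaxation performed when $w'$ was processed gives $\eT(w)\le g_{w',w}(\eT^*(w'))=\eT^*(w)$; since arrival times are non-decreasing along the package path, $\eT^*(w)\le\eT^*(u)<\eT(u)$, so $w$ would have been extracted before $u$, a contradiction. This step is where monotonicity of $g_{u,v}$ does the work: it is precisely the FIFO condition legitimizing the greedy extraction. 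It then follows that $\eT(y)=\eT^*(y)$ and that the schedule $\mathcal{S}(y)$ reconstructed from the stored labels $L(\cdot)$ and $\prev(\cdot)$ is optimal for $\DDT_\N$ with arbitrary copies.

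For the remaining claims, the simple-path bound is immediate from Lemma~\ref{lem:minT-NS_statement}(ii): each node enters and leaves $Q$ at most once and can be set as a predecessor only while unprocessed, so the package visits each node at most once and the delivery path is simple with at most $|V|-1$ edges. For the running time, the preprocessing in line~1 runs one single-source shortest-path computation per agent on its subgraph $G_a$, costing $O(|E_a|+|V_a|\log|V_a|)=O(m+n\log n)$ each and $O(k(m+n\log n))$ in total with a Fibonacci heap. The main loop extracts each node once and relaxes each edge once; each relaxation calls \textsc{NeiDelivery}, which scans $A(u,v)$, so the total \textsc{NeiDelivery} cost is $\sum_{\{u,v\}\in E}|A(u,v)|=O(km)$, while the queue operations add $O(m+n\log n)$. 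Summing yields the claimed $O(k(m+n\log n))$ bound.

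The hard part will be the correctness core, i.e.\ pinning down the monotonicity of $g_{u,v}$ and the fact that the arbitrary-copies assumption is exactly what decouples each per-edge term from the delivery history. With this assumption the last-edge cost depends only on the arrival time at $u$ and not on which agents were used earlier, so the problem reduces cleanly to a time-dependent shortest-path problem with FIFO edge delays; I would flag that without this assumption the per-edge term could depend on prior usage of the same agent and the optimality equation would fail.
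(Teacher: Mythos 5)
Your proposal is correct and follows essentially the same route as the paper: a time-dependent Dijkstra argument whose correctness rests on the invariant that each extracted node's label equals its true earliest arrival time, followed by the same appeal to Lemma~\ref{lem:minT-NS_statement}(ii) for the simple-path bound and the same running-time accounting. The only difference is one of rigor: you make explicit the FIFO monotonicity of the per-edge continuation time $g_{u,v}$ and the role of the arbitrary-copies assumption in decoupling each edge's cost from the delivery history, both of which the paper's proof uses implicitly without stating.
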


\begin{proof}
We claim that the arrival time $\eT(u)$ for each node $u$ is minimum by the time $u$ gets processed. Obviously, the first processed node $s$ has arrival time $\eT(s)=0$. Whenever a node $u$ is processed,
the earliest arrival time for each unprocessed neighbor is updated (line 19 of Algorithm~\ref{alg:minT}) if the package can
reach that neighbor earlier via node~$u$ (line 4 in Algorithm~\ref{alg:FNDforN} identifies
the agent $a^*$ that can bring the package from $u$ to the neighbor the fastest).
At the time when a node $u$ is removed from
the priority queue $Q$ and starts to be processed, all nodes $v$ with
$\eT(v)<\eT(u)$ have already been processed, and so its value $\eT(u)$ must
be equal to the earliest time when the package can reach~$u$.
The algorithm terminates when $y$ is removed from~$Q$.
 
In lines 19--21 of Algorithm~\ref{alg:minT}, we update the arrival time $\eT(v) $ and the agent $L(v)$ as well as the predecessor node $\prev(v)$ without explicitly maintaining the schedules $\mathcal{S}(v)$. The schedule $\mathcal{S}(v)$ can be retraced from $L(\cdot)$ and $\prev(\cdot)$ since the schedule found for $\eT(v)$ visits each node in $V$ at most once, cf.\ Lemma~\ref{lem:minT-NS_statement}.  
This also shows that the package gets
delivered from $s$ to $y$ along a simple path (with at most
$|V|-1$ edges) in $G$.

We can analyze the running-time of Algorithm~\ref{alg:minT} as follows. The distances $\dist_a(p_a,v)$ can be pre-computed
by running Dijkstra's algorithm with Fibonacci heaps
in time $O(m+n\log n)$~\cite{CLRS/09} for each agent $a$ with source
node $p_a$ in the graph $G_a$, taking total time $O(k(m+n\log n))$ (line 2).
Selecting and removing a minimum element from the priority queue $Q$ in lines 10--11 takes time $O(\log n)$. At most $n$ nodes will be added into $Q$ (and later removed from $Q$), so the running time for inserting and removing elements from $Q$ is $O(n \log n)$.
 For each processed node $u$, we compute the value $\eT(v, u\prec v)$ for each unprocessed adjacent node $v$ with $A(u, v)\neq \emptyset$ in time $O(|A(u,v)|)$. Overall we get a running time of $O(k(m+n\log n)+n\log n+km)=O(k(m+n\log n))$.
 \end{proof}

If the algorithm uses an agent $a$ on consecutive edges of the
package delivery path, this can be seen as a single trip by
agent~$a$. If, however, the algorithm uses the same agent $a$
on edges that are not consecutive, then this corresponds to
using different copies of agent~$a$. This is because the algorithm
assumes that $a$ can always travel from its initial location to
the node where it picks up the package, which may be a shorter trip
than traveling from the agent's previous drop-off location to
the pick-up node. Hence, the delivery schedule produced by the
algorithm is guaranteed to be feasible only if we have arbitrarily
many copies of each agent.

Next, we can show how to convert the delivery schedule $\mathcal{S}$ with
delivery time $\mathcal{T}$ produced by
the algorithm of Lemma~\ref{lem:minT-NS_opt} into a schedule that
is feasible with a single copy per agent, and we bound the resulting increase in the
delivery time $\mathcal{T}$ to obtain an approximation algorithm for $\DDT_\N$.
The conversion consists of the repeated application of modification
steps. Each step considers the first agent $a$ that is used at least twice.
Let $a$ pick up the package at $u_{i-1}$ in its first trip and carry it from $u_{j-1}$ to $u_j$ in its last trip. We then modify the schedule so that $a$ picks up the package at $u_{i-1}$ and carries it all the way to $u_j$ along a shortest
path in $G_a$. We have $\frac{1}{v_a}\dist_a(p_a,u_{i-1})\le \mathcal{T}$ and
$\frac{1}{v_a}(\dist_a(p_a,u_{j-1})+\dist_a(u_{j-1},u_j))\le \mathcal{T}$.
By the triangle inequality, $\dist_a(u_{i-1},u_j)\le \dist_a(u_{i-1},p_a)
+\dist_a(p_a,u_{j-1})+\dist_a(u_{j-1},u_j)$. Hence, agent $a$ needs time
at most $2\mathcal{T}$ to carry the package from $u_{i-1}$ to $u_j$,
and so the delivery time increases by at most $2\mathcal{T}$. Furthermore,
we can bound the number of modification steps by $\min\{\frac{n-1}{3},k-1\}$.

\begin{restatable}{theorem}{minTNSapprox}
\label{the:minT-NS_approx}
There is a $\min\{2n/3+1/3, 2k-1\}$-approximation algorithm for $\DDT_\N$.
\end{restatable}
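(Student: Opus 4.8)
The plan is to pair the copy-relaxation solved in Lemma~\ref{lem:minT-NS_opt} with a ``rounding'' procedure that merges the repeated uses of each agent, and to charge the resulting blow-up against the relaxation's value. First I would run the algorithm of Lemma~\ref{lem:minT-NS_opt} to obtain an optimal schedule $\mathcal{S}$ for the variant in which arbitrarily many copies of each agent are available; denote its delivery time by $\mathcal{T}$. Since allowing copies only enlarges the feasible set, $\mathcal{T}$ is a lower bound on the optimal single-copy delivery time $\mathrm{OPT}$. By the lemma (and Lemma~\ref{lem:minT-NS_statement}) the package travels along a simple path with at most $n-1$ edges, visiting every node at most once, so the pickup/dropoff nodes $u_0=s,u_1,\dots,u_h=y$ are pairwise distinct; this is the structural fact I will exploit throughout.

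Next I would apply the modification step described just before the theorem, processing agents strictly \emph{left to right}: in the current schedule pick the first agent $a$ (in package order) that is used at least twice, with first pickup $u_{i-1}$ and last dropoff $u_j$, and replace its trips and the intervening trips by a single trip of $a$ along a shortest path of $G_a$ from $u_{i-1}$ to $u_j$. Because $a$ is the \emph{first} repeated agent, the trips up to $u_{i-1}$ use pairwise distinct agents and are untouched; processing left to right maintains the invariant that the agent being processed, together with its whole segment, still lies in the untouched suffix and hence is exactly as in the original $\mathcal{S}$. Consequently the two inequalities $\dist_a(p_a,u_{i-1})/v_a\le\mathcal{T}$ and $(\dist_a(p_a,u_{j-1})+\dist_a(u_{j-1},u_j))/v_a\le\mathcal{T}$ always refer to the fixed original $\mathcal{T}$ and remain valid (a copy of $a$ picked up at $u_{i-1}$ and delivered at $u_j$ no later than $\mathcal{T}$). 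Using the triangle inequality $\dist_a(u_{i-1},u_j)\le\dist_a(u_{i-1},p_a)+\dist_a(p_a,u_{j-1})+\dist_a(u_{j-1},u_j)$, and observing that the old arrival time at $u_j$ already included the carry time $\dist_a(u_{j-1},u_j)/v_a$ after a pickup no earlier than the pickup at $u_{i-1}$, I would conclude that the arrival time at $u_j$ increases by at most $2\mathcal{T}$. Since the recurrence for $T(\cdot)$ is monotone (increasing $T(u_{i-1})$ by $\delta$ increases each later $T(\cdot)$ by at most $\delta$, because of the outer $\max$), the overall delivery time grows by at most $2\mathcal{T}$ per step.

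It remains to bound the number $p$ of steps by $\min\{(n-1)/3,\,k-1\}$. The left-to-right order makes the merged segments disjoint: the segment merged in step $\ell+1$ starts at or after the endpoint of the segment merged in step $\ell$, and each such segment is a piece of the original simple path (the suffix is never altered before it is processed). Each merged segment spans at least three consecutive trips, since $a$ appears, then a different agent appears (consecutive agents differ), then $a$ reappears, so its last index is $\ge i+2$; hence the segment covers at least three distinct edges of the path. As these edge sets are disjoint and the path has at most $n-1$ edges, $3p\le n-1$, i.e. $p\le(n-1)/3$. For the $k-1$ bound I would use that each step permanently converts one distinct agent into a single-use agent, that processed agents are pairwise distinct, and that every merged segment also contains a trip of a different agent; a short accounting then shows at least one agent is never processed, giving $p\le k-1$. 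Combining, $T(y)\le(1+2p)\,\mathcal{T}\le\min\{\tfrac{2n+1}{3},\,2k-1\}\cdot\mathcal{T}\le\min\{2n/3+1/3,\,2k-1\}\cdot\mathrm{OPT}$, and all handovers stay at the nodes $u_{i-1},u_j$, so the final schedule is feasible for $\DDT_\N$.

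I expect the delicate part to be the step count rather than the $2\mathcal{T}$ estimate. The per-step bound is clean once the left-to-right order is fixed, because that order guarantees the suffix is identical to $\mathcal{S}$ and keeps the original distance bounds applicable. The obstacle is extracting the exact constants: a naive count of removed handover nodes only yields a reduction of two per step (hence roughly $(n-2)/2$ steps and a weaker ratio), so one must argue via the \emph{disjointness} of the merged segments and the ``$\ge 3$ trips per segment'' lower bound to reach $(n-1)/3$, and similarly push the agent count from $k$ down to $k-1$. I would therefore state and prove the disjointness invariant explicitly before doing the arithmetic.
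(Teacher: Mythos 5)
Your proposal is correct and follows essentially the same route as the paper's proof: run the multi-copy relaxation of Lemma~\ref{lem:minT-NS_opt}, merge each repeated agent's first and last trips via a shortest path in $G_a$, bound the per-step delay by $2\mathcal{T}$ with the triangle inequality, and bound the number of steps by $\min\{(n-1)/3,k-1\}$ by charging each merge to three disjoint edges of the simple delivery path and to one agent. The only cosmetic difference is that you charge whole disjoint merged segments while the paper charges one edge from each of the three sub-intervals $[u_{i-1},u_i]$, $[u_i,u_{j-1}]$, $[u_{j-1},u_j]$; the arithmetic and conclusion are identical.
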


\begin{proof} 
Based on the solution $\mathcal{S}(y)$ obtained by the algorithm of Lemma~\ref{lem:minT-NS_opt}, we create a feasible solution $\mathcal{S}'(y)$ for $\DDT_\N$
by modifying the schedule until each agent is involved in the delivery at most once.
Initially, let $\mathcal{S}'(y)=\mathcal{S}(y)$.
Consider the first agent $a$ in $\mathcal{S}'(y)$ that is involved in the delivery at least twice.
Let $a$ be involved for the first time as the $i$-th agent and for the last time as the $j$-th agent.
Then $a$ carries the package from node $u_{i-1}$ to node $u_i$ during
its first involvement and from node $u_{j-1}$ to node $u_j$ during its last involvement in $\mathcal{S}'(y)$, and at least one other agent is involved in carrying the package from $u_i$ to $u_{j-1}$.
We modify the schedule and use agent $a$ to carry
the package from $u_{i-1}$ to $u_{j}$ along a shortest path in $G_a$ and
remove the trips made by the $(i+1)$-th agent to the $(j-1)$-th agent.
If the package arrives at $u_{j}$ later in the resulting schedule,
the times of all subsequent trips are delayed accordingly. Replace
$\mathcal{S}'(y)$ by the new schedule, and repeat this operation as long
as there exists an agent that is involved in the delivery at least twice.

Note that the schedule $\mathcal{S}(y)$ transports the package over a path $P$ consisting of at most $n-1$ edges. 
A modification step that lets $a$ carry the package from $u_{i-1}$ to $u_j$ directly
can be charged to three edges of $P$ that are not charged by any other modification
step: An edge on $P$ between $u_{i-1}$ and $u_i$, one between $u_i$ and $u_{j-1}$, and
one between $u_{j-1}$ and $u_j$. Therefore, there are at most $(n-1)/3$ modification steps.
Similarly, the agent $a$ for which a modification step is carried out will not be
involved in another modification step, and modification steps can only be necessary
if there are still at least two agents for which no modification step has been
carried out so far. Therefore, the number of modification steps can also be
bounded by $k-1$.
Thus, to obtain the final $S'(y)$, we need to modify the schedule at most $\min\{(n-1)/3, k-1\}$ times.

Next, we bound the delivery time of the constructed schedule $\mathcal{S}'(y)$.
We claim that each modification step increases the delivery time of the package
by at most $2\eT(y)$.
Consider a modification step that lets agent $a$ carry the package from $u_{i-1}$
to $u_{j}$. Considering the original schedule $\mathcal{S}(y)$, we observe:
$$
\eT(u_i)\ge \frac{\dist_a(p_a,u_{i-1})}{v_a}+\frac{\dist_a(u_{i-1},u_i)}{v_a}
$$
$$
\eT(u_j)\ge \frac{\dist_a(p_a,u_{j-1})}{v_a}+\frac{\dist_a(u_{j-1},u_j)}{v_a}
$$
By the triangle inequality, we have
$\dist_a(u_{i-1},u_j) \le
\dist_a(u_{i-1},p_a)+\dist_a(p_a,u_{j-1})+\dist_a(u_{j-1},u_j)$
and hence
\begin{align}
\frac{\dist_a(u_{i-1},u_j)}{v_a} & \le
\frac{\dist_a(u_{i-1},p_a)}{v_a}+\frac{\dist_a(p_a,u_{j-1})}{v_a}+\frac{\dist_a(u_{j-1},u_j)}{v_a}\\
& \le \eT(u_i)+\eT(u_j) \le 2\cdot \eT(y)
\end{align}
The time when $a$ picks up the package at node $u_{i-1}$ is unchanged
by the modification, and the time for carrying the package from $u_{i-1}$
to $u_j$ is at most $2\cdot\eT(y)$ after the modification. Therefore,
the time when the package reaches $u_j$, and hence the delivery time
of the package at $y$, increases by at most $2\cdot\eT(y)$ in the modification, as claimed.

Since there are at most $\min\{(n-1)/3, k-1\}$ modification steps, we have  $$\eT'(y)\le \eT(y)+ \min\{(n-1)/3, k-1\} \cdot 2\cdot \eT(y)=\min\{2n/3+1/3, 2k-1\} \cdot \eT(y).$$ This completes the proof.
 \end{proof}

%

To adapt the approach from $\DDT_\N$ to $\DDT_\E$, we can
adapt the algorithm of Lemma~\ref{lem:minT-NS_opt} to
edge handovers by using as a subroutine the
\textsc{FastLineDelivery}$(u,v,t)$ method
from~\cite{DBLP:journals/jcss/CarvalhoEP21}, which calculates
in $O(k\log k)$ time an optimal delivery schedule using the
agents in $A(u,v)$ to transport the package that
arrives at $u$ at time $t$ from $u$ to $v$ over the
edge $\{u,v\}$.
When transforming the resulting package delivery schedule
into one that uses each agent at most once,
the number of modification steps can be bounded by $\min\{n-1,k-1\}$.

 For $\DDT_\E$, we may need to choose a set of agents to collaborate on carrying the package over an edge $\{u,v\}$ since the package can be handed over at any point in the interior of the edge. Therefore, in line 17 of Algorithm~\ref{alg:minT} we now call as subroutine \textsc{NeiDelivery}$(u,v,t)$ instead
of Algorithm~\ref{alg:FNDforN} a
subroutine that computes the fastest way for carrying the package from $u$ to $v$ over
the edge $\{u,v\}$ assuming that the package has arrived at node $u$ at time~$t$.
This subproblem can be solved in $O(k\log k)$ time using the algorithm
\textsc{FastLineDelivery}$(u,v,t)$ from~\cite{DBLP:journals/jcss/CarvalhoEP21}.
That algorithm can be applied in our setting with movement restrictions because
it considers a single edge $\{u,v\}$. Only the agents in $A(u,v)$ are relevant
for the solution, and all other agents can be ignored. We know for
each agent in $A(u,v)$ the earliest time when it can reach $u$ and the earliest
time when it can reach $v$, and so we can execute \textsc{FastLineDelivery}$(u,v,t)$
(together with its subroutines \textsc{PreprocessSender} and \textsc{PreprocessReceiver})
from~\cite{DBLP:journals/jcss/CarvalhoEP21} in $O(k\log k)$ time.

Similar to Lemma~\ref{lem:minT-NS_opt}, we thus get the following:

 \begin{lemma}
\label{lem:minT-ES_opt}
Algorithm~\ref{alg:minT} computes an optimal schedule for $\DDT_\E$ in time $O(k(m\log k +n\log n))$ under the assumption that an arbitrary number of copies of each agent can be used.   
 \end{lemma}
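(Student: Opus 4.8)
The plan is to mirror the proof of Lemma~\ref{lem:minT-NS_opt}; the only substantive change is that the per-edge subroutine invoked in line~17 of Algorithm~\ref{alg:minT} is now \textsc{FastLineDelivery}$(u,v,t)$ from~\cite{DBLP:journals/jcss/CarvalhoEP21} rather than \textsc{NeiDelivery}$(u,v,t)$ of Algorithm~\ref{alg:FNDforN}. Let $\eT(v)$ denote the true earliest time at which the package can reach node~$v$ when an arbitrary number of copies of each agent is available and handovers may occur at interior points of edges. I would argue (i) that the Dijkstra invariant still certifies $\eT(v)$ correctly for every node, and (ii) that the running time is $O(k(m\log k+n\log n))$.

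For correctness, the key observation is that the package cannot arrive at~$v$ over the edge $\{u,v\}$ before it arrives at~$u$: any collaborative delivery on that edge can begin only once the package is at~$u$ at time $t=\eT(u)$, and crossing a portion of the edge takes non-negative time. Hence $\eT(v,u\prec v)\ge \eT(u)$ for every edge, so all (time-dependent) transit times are non-negative and the earliest arrival at~$v$ is a non-decreasing function of the arrival time at~$u$. This is precisely the property on which the correctness argument of Lemma~\ref{lem:minT-NS_opt} rests, so the same reasoning shows that $\eT(u)$ is optimal at the moment $u$ is removed from~$Q$, and in particular $\eT(y)$ equals the optimal delivery time. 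It then remains to justify that \textsc{FastLineDelivery}$(u,v,t)$ correctly returns the earliest arrival at~$v$ given arrival at~$u$ at time~$t$. Only the agents in $A(u,v)$ can traverse $\{u,v\}$, and every other agent is irrelevant for this single-edge subproblem; for each $a\in A(u,v)$ the pre-computed distances $\dist_a(p_a,\cdot)$ give the earliest time a fresh copy of~$a$ can reach~$u$ and the earliest time it can reach~$v$, which are exactly the inputs required by \textsc{FastLineDelivery} (with its subroutines \textsc{PreprocessSender} and \textsc{PreprocessReceiver}), whose correctness is established in~\cite{DBLP:journals/jcss/CarvalhoEP21}. Since the Dijkstra structure is unchanged, the package is carried along a simple path of nodes of~$G$ (as in Lemma~\ref{lem:minT-NS_statement}), so the schedule, now recording for each path edge the collaborating agents and their handover points, can be retraced from $\prev(\cdot)$ and $L(\cdot)$.

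For the running time, pre-computing $\dist_a(p_a,\cdot)$ by running Dijkstra with Fibonacci heaps in $G_a$ for each of the $k$ agents costs $O(k(m+n\log n))$. The while-loop processes at most $n$ nodes, and the priority-queue operations total $O(n\log n)$. Each edge $\{u,v\}$ is examined only from the endpoint that is processed first, so \textsc{FastLineDelivery} is invoked $O(m)$ times, each call costing $O(k\log k)$, contributing $O(mk\log k)$. Summing gives $O(k(m+n\log n)+n\log n+mk\log k)=O(k(m\log k+n\log n))$, as claimed.

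The main obstacle is point~(i) together with the verification that \textsc{FastLineDelivery} can be plugged in verbatim: one must confirm that the single-edge optimiser of~\cite{DBLP:journals/jcss/CarvalhoEP21} depends on the agents in $A(u,v)$ only through their earliest reach times at the two endpoints of the edge, which is exactly the information our pre-computation supplies in the restricted-movement setting, and that non-negativity of transit preserves the Dijkstra invariant despite the time-dependence introduced by edge handovers.
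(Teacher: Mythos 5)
Your proposal is correct and follows essentially the same route as the paper, which likewise obtains this lemma by plugging \textsc{FastLineDelivery}$(u,v,t)$ into line~17 of Algorithm~\ref{alg:minT}, noting that only the agents in $A(u,v)$ with their precomputed earliest reach times at $u$ and $v$ matter for the single-edge subproblem, and reusing the argument of Lemma~\ref{lem:minT-NS_opt} with the extra $O(k\log k)$ per-edge cost yielding $O(k(m\log k+n\log n))$. Your explicit justification of the Dijkstra invariant via the non-negativity and monotonicity of the time-dependent transit times is a detail the paper leaves implicit, but it is the same underlying argument.
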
 
 

We note that the schedule produced for agents collaborating to transport the
package over an edge $\{u,v\}$ by algorithm \textsc{FastLineDelivery}$(u,v,t)$ from~\cite{DBLP:journals/jcss/CarvalhoEP21} has the property that the velocities
of the agents used are strictly increasing. Therefore, each agent is used at most
once on each edge of the path along which the package is delivered from $s$ to $y$.
We can now show the following theorem:

\begin{restatable}{theorem}{minTESapprox}
\label{the:minT-ES_approx}
There is a $\min\{2n-1, 2k-1\}$-approximation algorithm for $\DDT_\E$.
\end{restatable}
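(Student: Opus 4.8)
The plan is to mirror the argument used in the proof of \Cref{the:minT-NS_approx}, adapting it from the node-handover setting to the edge-handover setting. We start from the optimal schedule $\mathcal{S}(y)$ produced by \Cref{lem:minT-ES_opt} (which uses arbitrarily many copies of each agent) with delivery time $\eT(y)$, and transform it into a feasible schedule $\mathcal{S}'(y)$ that uses a single copy per agent. As before, we repeatedly pick the first agent $a$ that is involved at least twice: say $a$ first picks up the package at $u_{i-1}$ and last carries it from $u_{j-1}$ to $u_j$. We replace all trips between the first and last involvement of $a$ by having $a$ carry the package directly from $u_{i-1}$ to $u_j$ along a shortest path in $G_a$, delaying subsequent trips as needed.

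The two quantities to control are the number of modification steps and the delivery-time increase per step. For the time increase, the bound is identical to the node-handover case: from the original schedule we have $\eT(u_i)\ge \frac{\dist_a(p_a,u_{i-1})}{v_a}+\frac{\dist_a(u_{i-1},u_i)}{v_a}$ and $\eT(u_j)\ge \frac{\dist_a(p_a,u_{j-1})}{v_a}+\frac{\dist_a(u_{j-1},u_j)}{v_a}$, and the triangle inequality in $G_a$ gives $\dist_a(u_{i-1},u_j)\le \dist_a(u_{i-1},p_a)+\dist_a(p_a,u_{j-1})+\dist_a(u_{j-1},u_j)$, so each step adds at most $2\eT(y)$ to the delivery time. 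This part carries over verbatim.

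The part that differs, and which I expect to be the main obstacle, is bounding the number of modification steps by $\min\{n-1,k-1\}$ rather than $\min\{(n-1)/3,k-1\}$. In the node-handover setting, the path $P$ along which the package travels uses a \emph{distinct} agent on each edge (one pick-up and drop-off per edge), so a modification step could be charged to three fresh edges of $P$, giving the factor $1/3$. In the edge-handover setting this is no longer true: by \Cref{lem:minT-ES_opt} and the property of \textsc{FastLineDelivery}, a single edge $\{u,v\}$ may be traversed by several agents collaboratively (with strictly increasing velocities), so a single edge of $P$ may account for several agent involvements. Consequently one can only charge a modification step to a single edge of $P$ between $u_{i-1}$ and $u_j$ that has not yet been charged, yielding the weaker bound of $n-1$ steps from the at-most-$(n-1)$ edges of $P$. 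The $k-1$ bound is unchanged: the modified agent $a$ is never touched again, and a step is only needed while at least two agents remain unmodified.

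Combining these two bounds, the total increase is at most $\min\{n-1,k-1\}\cdot 2\eT(y)$, so
$$
\eT'(y)\le \eT(y)+\min\{n-1,k-1\}\cdot 2\eT(y)=\min\{2n-1,2k-1\}\cdot \eT(y),
$$
which gives the claimed approximation ratio. The one subtlety to verify carefully is that the charging scheme for the edge-count bound is sound when a single edge carries several agents: I would argue that each modification step removes at least one agent involvement from the path and that the edges strictly between $u_{i-1}$ and $u_j$ used in future steps are disjoint from the one charged now, so no edge is charged twice.
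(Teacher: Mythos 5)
Your proposal is correct and follows essentially the same route as the paper's proof: transform the multi-copy schedule from Lemma~\ref{lem:minT-ES_opt} by repeatedly merging the first and last involvements of a doubly-used agent, bound each step's delay by $2\eT(y)$ via the triangle inequality in $G_a$, and bound the number of steps by $\min\{n-1,k-1\}$ because with edge handovers a step can be charged to only one edge rather than three. Your closing remark on why the single-edge charging is sound (distinct steps charge distinct edges, since the strictly-increasing-velocity property of \textsc{FastLineDelivery} prevents an agent from being used twice within one edge) is exactly the subtlety the paper also flags.
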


\begin{proof}
Similar to the proof of Theorem~\ref{the:minT-NS_approx}, we repeatedly modify the schedule $\mathcal{S}$ obtained by Algorithm~\ref{alg:minT} (using \textsc{FastLineDelivery} from~\cite{DBLP:journals/jcss/CarvalhoEP21} as
the \textsc{NeiDelivery} subroutine) for $\DDT_\E$ until each agent is involved at most once. The difference here is that every modification can only be charged to a single edge, not three edges.
For example, if the modification operation is applied to an agent $a$ that carries the package
on parts of consecutive edges $e_1$ and $e_2$, then the modification can only be charged to $e_1$,
because another agent carrying the package on a later part of $e_2$ may be involved in
another modification. Thus, the number of modification steps can only be bounded by
$\min\{n-1, k-1\}$. By similar arguments as in the proof of Theorem~\ref{the:minT-NS_approx}, we get $\eT'(y)\le \min\{2n-1, 2k-1\} \cdot \eT(y)$. 
\end{proof}


\section{Approximation algorithm for the \texorpdfstring{$\DDC$}{DDC} problem}
\label{sec:alg_DDC}%
By Lemma~\ref{lem:eqMinC-SE-SN}, handovers on interior points of edges
are not needed for $\DDC$, so the results for $\DDC_\N$ that
we present in this section automatically apply to $\DDC_\E$ as well.
Therefore, we only consider $\DDC_\N$ in the proofs. We first give an
algorithm that solves $\DDC_\N$ optimally if there is a sufficient
number of copies of every agent.

Let an instance of $\DDC_\N$ be given by
a graph $G=(V,E,\ell)$, package start node $s$ and destination node $y$, and a set $A$ of $k$ agents where $a=(p_a, v_a, w_a, V_a, E_a)$ for $a\in A$.
We create a directed graph $G'$ in which a shortest
path from $s'$ to $y'$ corresponds to an optimal delivery schedule.
This approach is motivated by a method used by B{\"{a}}rtschi et al.~\cite{DBLP:conf/stacs/BartschiC0D0HP17}.
We construct the directed edge-weighted graph $G'=(V', E', \ell')$ as follows:

\begin{itemize}
    \item For each node $u\in V$ and each agent $a \in A$ with $u\in V_a$, create a node $u_a$ in $V'$. In addition, add a node $s'$ and a node $y'$.
    
    \item For all $a\in A$ with $s\in V_a$, add an arc $(s', s_a)$ with $\ell'{(s', s_a)}=w_a\cdot \dist_a(p_a, s)$. For all $a\in A$ with $y\in V_a$, add an arc $(y_a, y')$ with $\ell'{(y_a, y')}=0$.
    
    \item For $\{u,x\}\in E$, for each $a$ with $\{u,x\} \in E_a $, create two arcs $(u_a, x_a)$ and $(x_a,u_a)$ with $\ell'{(x_a,u_a)}=\ell'{(u_a,x_a)}=w_a\cdot \ell{(u,x)}$.
    
    \item For $u\in V$ and agents $a,\bar{a}\in A(u)$, create the following two arcs: $(u_a, u_{\bar{a}})$ with $\ell'{(u_a, u_{\bar{a}})}=w_{\bar{a}}\cdot \dist_{\bar{a}}(p_{\bar{a}}, u)$, and  $(u_{\bar{a}}, u_a)$ with $\ell'{(u_{\bar{a}}, u_a)}=w_a\cdot \dist_a(p_a, u)$.
\end{itemize}
Intuitively, a node $u_a$ in $G'$ represents the agent $a$ carrying the package
at node $u$ in $G$. An arc $(u_a,x_a)$ represent the agent $a$ carrying the
package over edge $\{u,x\}$ from $u$ to $x$. An arc $(u_a,u_{\bar{a}})$
represents a copy of agent $\bar{a}$ traveling from $p_{\bar{a}}$ to
$u$ and taking over the package from agent $a$ there.
We can show that a shortest $s'$-$y'$ path in $G'$ corresponds
to an optimal schedule with multiple copies per agent.

\begin{restatable}{lemma}{DDCmulti}
\label{lem:DDCmulti}%
An optimal schedule for $\DDC_\N$ (and $\DDC_\E$) can be computed
in time $O(nk^2+n^2k)$
under the assumption that an arbitrary number of copies of each agent
can be used.
\end{restatable}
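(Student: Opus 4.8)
The plan is to show that a shortest $s'$-$y'$ path in the directed graph $G'$ has weight equal to the minimum energy consumption of a $\DDC_\N$ schedule that may use arbitrarily many copies of each agent, and that it can be found within the stated time bound; by Lemma~\ref{lem:eqMinC-SE-SN} this also settles $\DDC_\E$. Since $w_a\ge 0$ and $\ell\ge 0$, all arc weights $\ell'$ are nonnegative, so a shortest $s'$-$y'$ \emph{walk} has the same weight as a shortest $s'$-$y'$ path and Dijkstra's algorithm applies. It therefore suffices to prove two inequalities relating the shortest-walk weight to the optimal schedule cost, and then to bound the size of $G'$.

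For the direction ``path $\Rightarrow$ schedule'', I would decode a shortest $s'$-$y'$ walk $s'\to s_{a_1}\to\cdots\to y_{a_h}\to y'$ as follows: the leading arc $(s',s_{a_1})$ sends the first agent from $p_{a_1}$ to $s$; a maximal run of layer-$a$ arcs $(u_a,x_a)$ tells agent $a$ to carry the package along the corresponding edges of $G$; each handover arc $(u_a,u_{\bar{a}})$ sends a fresh copy of $\bar{a}$ from $p_{\bar{a}}$ to $u$ to take over the package there; and the trailing arc $(y_{a_h},y')$ costs nothing. By construction each arc weight equals exactly the energy spent by the movement it encodes, so this schedule is feasible (copies are available) and consumes energy equal to the walk weight; hence the optimal schedule cost is at most the shortest-walk weight.

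The converse, ``schedule $\Rightarrow$ path'', is where I expect the real work. Starting from an optimal schedule, note that by definition it is a sequence of trips that carries the package along nodes $s=u_0,u_1,\dots,u_h=y$, with the $i$-th trip performed by some agent (copy) $a_i$; by optimality I may assume that the empty movement from $p_{a_i}$ to $u_{i-1}$ and the delivery movement from $u_{i-1}$ to $u_i$ each follow a shortest path in $G_{a_i}$, since shortening either can only decrease the energy. I would then assemble the walk $s'\to s_{a_1}\to(\text{shortest layer-}a_1\text{ route to }u_1)\to u_{1,a_2}\to\cdots\to y_{a_h}\to y'$, using the handover arc $(u_{i-1,a_{i-1}},u_{i-1,a_i})$ to charge the empty travel of $a_i$ and the layer-$a_i$ edge arcs to charge its carrying. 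The crux is to check, term by term, that the walk weight equals $C(y)$: the initial arc pays $w_{a_1}\dist_{a_1}(p_{a_1},s)$, each handover arc pays $w_{a_i}\dist_{a_i}(p_{a_i},u_{i-1})$, and each layer run pays $w_{a_i}\dist_{a_i}(u_{i-1},u_i)$, matching exactly the three contributions to the energy consumption. This produces an $s'$-$y'$ walk of weight equal to the optimal cost, so the shortest-walk weight is at most the optimal schedule cost; with the previous paragraph this yields equality, and the optimal schedule is recovered by decoding the computed shortest path.

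It remains to bound the running time. The graph $G'$ has $|V'|=O(nk)$ nodes, one per pair $(u,a)$ with $u\in V_a$ plus $s'$ and $y'$. There are $O(mk)=O(n^2k)$ carry arcs (two per incidence of an edge with an agent owning it), $\sum_{u\in V}|A(u)|^2=O(nk^2)$ handover arcs (using $|A(u)|\le k$ and $\sum_u|A(u)|=\sum_a|V_a|\le nk$), and $O(k)$ source/target arcs, so $|E'|=O(n^2k+nk^2)$. The only shortest-path data needed for the arc weights are the single-source distances $\dist_a(p_a,\cdot)$ in each $G_a$, which I would precompute by $k$ runs of Dijkstra in total time $O(k(m+n\log n))$. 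A final Dijkstra from $s'$ in $G'$ with a Fibonacci heap takes $O(|E'|+|V'|\log|V'|)=O(n^2k+nk^2+nk\log(nk))$ time; since $nk\log(nk)\le nk(n+k)=n^2k+nk^2$ and $mk\le n^2k$, the entire procedure, including decoding the path back into a schedule, runs in $O(nk^2+n^2k)$ time.
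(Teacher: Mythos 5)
Your proposal is correct and follows essentially the same route as the paper: both reduce the problem to a shortest $s'$-$y'$ path in the layered graph $G'$, establish the two-way correspondence between path weight and schedule energy, and obtain the $O(nk^2+n^2k)$ bound from the size of $G'$ plus the precomputed distances $\dist_a(p_a,\cdot)$. You are somewhat more explicit than the paper (which only spells out the schedule-to-path direction and calls the converse ``similar''), in particular about nonnegative weights making walks and paths interchangeable and about assuming shortest-path empty movements in an optimal schedule, but the argument is the same.
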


\begin{proof}
We claim that a shortest $s'$-$y'$ path in $G'$ corresponds to an optimal
delivery schedule. First, assume that an optimal delivery schedule $\mathcal{S}$ is
$$\{(a_1, s, u_1), (a_2, u_1, u_2), \ldots, (a_{h}, u_{h-1}, y)\},$$ 
where it is possible that $a_i=a_j$ for $j>i+1$ because we allow
copies of agents to be used. Then we can construct an $s'$-$y'$ path
in $G'$ whose length equals the total energy consumption of $\mathcal{S}$
as follows: Start with the arc $(s',s_{a_1})$. Then use the arcs
$(s_{a_1},z^{(1)}_{a_1}), \ldots, (z^{(g)}_{a_1},{u_1}_{a_1})$
corresponding to the path $s,z^{(1)},\ldots,z^{(g)},u_1$ (for some $g\ge 0$) along
which agent $a_1$ carries the package from $s$ to $u_1$ in $\mathcal{S}$.
Next, use the arc $({u_1}_{a_1},{u_1}_{a_2})$ representing the handover
from $a_1$ to $a_2$ at $u_1$. Continue in this way until node
$y_{a_{h}}$ is reached, and then follow the arc from there to $y'$.
Similarly, any $s'$-$y'$ path $P$ in $G'$ can be translated into a
delivery schedule in $G$ whose total energy consumption is equal to
the length of $P$ in $G'$.

Finally, let us consider the running-time of the algorithm.
First, we compute $\dist_a(p_a,v)$ for each agent $a$ and
each node $v\in V$ by running Dijkstra's algorithm with Fibonacci heaps~\cite{CLRS/09}
once in $G_a$ with source node $p_a$ for each $a\in A$, taking
time $O(k(n\log n + m))=O(kn^2)$.
The graph $G'$ has at most $n\cdot k+2\in O(nk)$ vertices and
at most $2k+(n^2\cdot k+ n\cdot k^2)\in O(nk^2+n^2k)$ arcs.
It can be constructed in $O(nk^2+n^2k)$ time as we have
pre-computed the values $\dist_a(p_a,v)$.
We can compute the shortest $s'$-$y'$ path in $G'$ in time $O(nk^2+n^2k+nk\log(nk))= O(nk^2+n^2k)$ time using Dijkstra's algorithm.
\end{proof}

\begin{theorem}
\label{the:minC_approx}
There is a 2-approximation algorithm for $\DDC_\N$ (and $\DDC_\E$).
\end{theorem}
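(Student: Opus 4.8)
The plan is to start from the schedule $\mathcal{S}$ returned by the algorithm of Lemma~\ref{lem:DDCmulti}, which is optimal when arbitrarily many copies per agent are allowed, and to transform it into a schedule $\mathcal{S}'$ that uses each agent at most once while at most doubling the energy consumption. Since the multi-copy setting is a relaxation of the single-copy problem, the energy $\mathcal{C}$ consumed by $\mathcal{S}$ is a lower bound on the single-copy optimum $\mathcal{C}^*$. By Lemma~\ref{lem:eqMinC-SE-SN} it suffices to treat $\DDC_\N$, so I may assume all handovers in $\mathcal{S}$ occur at nodes. I decompose $\mathcal{C}$ into the \emph{delivery cost} $D$ (the energy spent while actually carrying the package) and the \emph{empty-movement cost} $W$ (the energy each involved copy spends travelling from its initial position $p_a$ to its pick-up node), so that $\mathcal{C}=D+W$ and in particular $W\le \mathcal{C}$.

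The transformation mirrors the modification steps in the proof of Theorem~\ref{the:minT-NS_approx}: repeatedly pick the first agent $a$ that is used at least twice, say it first picks up the package at a node $p$ and in its last involvement carries it from $q$ to $d$; replace all trips from the first pick-up at $p$ up to the last drop-off at $d$ by a single trip in which $a$ travels from $p_a$ to $p$ and then carries the package from $p$ to $d$ along a shortest path in $G_a$, deleting the intermediate agents. This runs in polynomial time since there are at most $k-1$ such steps. Unlike the time objective, where each step could inflate the makespan by $2\mathcal{T}$, here I will bound the extra \emph{energy} of each step by a quantity that can be charged to a distinct part of $W$, which is exactly what yields a constant factor rather than a factor depending on $n$ or $k$.

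Concretely, the key estimate is the triangle inequality in $G_a$. The consolidated trip costs $w_a\,\dist_a(p_a,p)+w_a\,\dist_a(p,d)$, and bounding $\dist_a(p,d)\le \dist_a(p,p_a)+\dist_a(p_a,q)+\dist_a(q,d)$ shows this is at most $2\alpha+\beta+\gamma$, where $\alpha=w_a\,\dist_a(p_a,p)$ and $\beta=w_a\,\dist_a(p_a,q)$ are the first and last empty-movement costs of $a$ and $\gamma$ is the cost of its last delivery trip (which satisfies $\gamma\ge w_a\,\dist_a(q,d)$). The trips being removed already cost at least $\alpha+\beta+\gamma$, so the net increase in a single step is at most $\alpha$, the empty-movement cost of the agent's \emph{first} surviving involvement.

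It remains to sum these increments. Since consolidating one agent never modifies the surviving trips of the others (it only deletes some of them), the first-involvement empty-movement cost charged when $a$ is consolidated equals the empty-movement cost of one of $a$'s original involvements in $\mathcal{S}$; each agent is consolidated at most once, so distinct steps charge to distinct original involvements. Hence the total increase is at most $W$, giving $C(\mathcal{S}')\le \mathcal{C}+W\le 2\mathcal{C}\le 2\mathcal{C}^*$. I expect the main obstacle to be making this charging argument fully rigorous: one must verify that deleting intermediate agents really preserves the empty-movement costs of the remaining involvements, and that an agent whose first involvement is itself deleted during an earlier step cannot cause a double charge. Once this bookkeeping is in place, the triangle-inequality estimate together with the bound $W\le\mathcal{C}$ delivers the factor $2$ for both $\DDC_\N$ and $\DDC_\E$.
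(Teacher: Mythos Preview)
Your proposal is correct and follows essentially the same approach as the paper: compute the optimal multi-copy schedule via Lemma~\ref{lem:DDCmulti}, then repeatedly consolidate the first multiply-used agent using the triangle inequality $\dist_a(p,d)\le\dist_a(p,p_a)+\dist_a(p_a,q)+\dist_a(q,d)$ in $G_a$. The paper organizes the final accounting slightly differently---it argues that each agent's energy at most doubles (bounding the new cost by $2W_a$ where $W_a$ is the cost of that agent's first and last copies) rather than charging the per-step increase $\alpha$ against the global empty-movement budget $W$---but the underlying estimate and the resulting factor~$2$ are identical.
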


\begin{proof}
Let an instance of $\DDC_\N$ be given by
a graph $G=(V,E,\ell)$, package start node $s$ and destination node $y$, and a set $A$ of $k$ agents where $a=(p_a, v_a, w_a, V_a, E_a)$ for $a\in A$.
Compute an optimal delivery schedule that may use multiple copies of agents
using the algorithm of Lemma~\ref{lem:DDCmulti}.
Then we transform the schedule into one that uses each agent at most once
as follows: Let $a$ be the first agent that is used more than once in
the delivery schedule. Assume that $a$ carries the package
from node $u$ to node $v$ during its first involvement in the delivery and
from node $u'$ to node $v'$ during its last involvement in the delivery.
The energy consumed by these two copies of agent $a$ is:
$$
W_a = w_a ( \dist_a(p_a,u)+\dist_a(u,v)+\dist_a(p_a,u')+\dist_a(u',v') )
$$
We modify the schedule and let agent $a$ pick up the package at $u$
and carry it along a shortest path in $G_a$ from $u$ to $v'$. The
trips in the original schedule that bring the package from $v$
to $u'$ are removed. The new energy consumption by agent $a$ is
$w_a (\dist_a(p_a,u)+\dist_a(u,v'))$. By the triangle inequality,
$\dist_a(u,v')\le \dist_a(u,p_a)+\dist_a(p_a,u')+\dist_a(u',v')$.
Hence, the new energy consumption is bounded by $2W_a$.
As long as there is an agent that is used more than once, we
apply the same modification step to the first such agent.
The procedure terminates after at most $k-1$ modification steps.
During the execution of the procedure, the energy consumption
of every agent at most doubles. Therefore, the total energy
consumption of the resulting schedule for $\DDC_\N$ with a single
copy of each agent is at most twice the total energy consumption
of $\mathcal{S}$, which is a lower bound on the optimal energy
consumption.

The algorithm of Lemma~\ref{lem:DDCmulti} takes $O(nk^2+n^2k)$ time,
which dominates the time needed to carry out the modification steps.
Thus, the overall running time is $O(nk^2+n^2k)$.
\end{proof}

\section{Drone delivery on path and tree networks}
\label{sec:line}%

\subsection{Hardness of \texorpdfstring{$\DDT$}{DDT} on the path}
\label{sec:line-hardness}%

\begin{theorem}
$\DDT_\N$ and $\DDT_\E$ with initial positions are $\NP$-complete if the given graph is a path.
\end{theorem}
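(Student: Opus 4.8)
The plan is to prove membership in \NP{} and then \NP-hardness by a reduction from \textsc{Partition} (equivalently \textsc{Subset Sum}), with the single-copy restriction together with distinct agent speeds supplying the source of hardness. Membership is straightforward: by Lemma~\ref{lem:minT-once} an optimal schedule uses each involved agent at most once, so it consists of at most $k$ trips and can be written down in polynomial space; given such a schedule, its delivery time is evaluated in polynomial time through the recursion for $T(\cdot)$ (for $\DDT_\E$ one restricts, on each edge, to the agents whose range contains that edge, exactly as in Section~\ref{sec:alg_DDT}). Hence the decision version ``is there a schedule with $T(y)\le\mathcal{T}$?'' lies in \NP.

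The key observation driving the hardness construction is that, because the graph is a path and $s,y$ are two of its nodes, the package is carried monotonically from $s$ to $y$; by Lemma~\ref{lem:minT-once} a schedule is therefore just a tiling of the $s$--$y$ segment into consecutive stretches together with an assignment of a \emph{distinct} agent to each stretch. For such a schedule the time recursion collapses to $T(y)=\max_{i}\bigl(r_i+\sum_{j\ge i}c_j\bigr)$, where $r_i=\dist_{a_i}(p_{a_i},u_{i-1})/v_{a_i}$ is the \emph{absolute} time at which the $i$-th agent can first reach its pick-up point and $c_j$ is its carrying time. I would encode the numbers $a_1,\dots,a_n$ as edge lengths and tune the agents' homes, ranges, and speeds so that each critical stretch admits only a small prescribed set of agents, and so that an agent with a large release time $r_i$ can be used without causing waiting only if the package's cumulative arrival time at its stretch (which is a sum of previously chosen carrying times, i.e.\ a partial sum of the encoded numbers) matches a prescribed value. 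In this way meeting a target delivery time $\mathcal{T}$ simultaneously for all agents becomes equivalent to selecting a subset of the numbers summing to the target $B$. The distinctness of speeds is essential: with a common speed the release rates and carrying rates coincide and the constraint degenerates, consistent with the polynomial-time algorithm for equal speeds in Section~\ref{sec:line-alg}.

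I would then carry out the argument in three steps: (1) give the formal construction of the path, the edge lengths, and the agents $(p_a,v_a,w_a,V_a,E_a)$; (2) prove completeness by exhibiting, for a yes-instance, an explicit assignment of distinct agents to stretches and checking through the recursion that every term $r_i+\sum_{j\ge i}c_j$ is at most $\mathcal{T}$; and (3) prove soundness by arguing that any schedule with $T(y)\le\mathcal{T}$ must route each critical stretch to one of its admissible agents and that respecting all release times forces the committed numbers to sum to exactly the target. Because every edge of the constructed path lies in the range of only those agents intended to use it, node and edge handovers coincide, so the same reduction proves the statement for both $\DDT_\N$ and $\DDT_\E$ (as with the instances in Theorem~\ref{the:lowerboundminT}). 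The main obstacle is the timing calibration: since $T(y)$ is a maximum of release-plus-tail terms rather than a plain sum, I must choose the lengths and speeds so that in the yes-case no term exceeds $\mathcal{T}$ while in the no-case some term provably does, and I must rule out schedules that try to evade the budget (by dropping the package early and letting it idle, by starting an agent's empty movement at a cleverly chosen time, or by exploiting interior-edge handovers). Getting this separation sharp, which crucially relies on the agents having distinct speeds, is where the bulk of the work lies.
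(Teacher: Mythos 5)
Your high-level strategy is the same as the paper's (a partition-type reduction on a path, exploiting the single-use constraint, distinct speeds, and the fact that the delivery time is a maximum of release-time-plus-tail terms), and your NP-membership argument via Lemma~\ref{lem:minT-once} is fine and is something the paper does not even spell out. But as a proof of hardness this is a plan, not an argument: you give no construction, no edge lengths, no agent speeds or ranges, and no verification of either direction, and you yourself defer ``the timing calibration'' as ``where the bulk of the work lies.'' That calibration is precisely the content of the theorem. The paper resolves it with two specific devices you have not supplied. First, it reduces from \textsc{Even-Odd Partition} rather than plain \textsc{Partition}: each pair $\{x_{2i-1},x_{2i}\}$ becomes a pair of slow agents sharing one movement range that contains exactly two short carrying intervals (one of length $1$ to the left of a central node $z$, one of length $\tfrac12$ to the right), separated by long intervals of length $L$ that only infinite-speed agents can afford to traverse. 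This forces every number to be committed to exactly one side; with plain \textsc{Partition} you would still need a mechanism guaranteeing that every number participates and that each agent carries on exactly one stretch, which your sketch does not provide. Second, the two-sided penalty is achieved by a synchronizing agent $h_2$ that reaches $z$ at the fixed absolute time $S+Cn+\tfrac12$ together with the asymmetry between interval lengths $1$ and $\tfrac12$: if the left-side partial sum is too large the tail after $z$ makes the total exceed the target directly, and if it is too small the package idles at $z$ waiting for $h_2$ and the longer right-side tail again pushes the total over the target. Without exhibiting such a construction and checking both cases (including ruling out slow agents traversing length-$L$ intervals and ruling out $h_2$ carrying on a length-$\tfrac12$ interval), the reduction is not established.

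Two smaller points. You propose encoding the numbers as edge lengths; the paper instead encodes them in the speeds $v_{p_i}=1/(C+x_i/M)$ while keeping the short intervals at fixed lengths, which is what makes the left/right carrying times come out as $C+x_i/M$ versus $\tfrac12(C+x_i/M)$ and makes the balance condition collapse to $\sum_{X_1}x_i=\sum_{X_2}x_i$. Either encoding could in principle work, but you would need to redo the calibration for yours. Your observation that node and edge handovers coincide when every edge lies in a single agent's range is correct in spirit, but note that in the paper's instance the short intervals \emph{are} shared by two agents ($p_{2i-1}$ and $p_{2i}$); the paper instead argues directly that edge handovers do not help, so this step also needs an actual argument rather than the disjoint-ranges shortcut.
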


\begin{proof}
We give a reduction from the 
$\NP$-complete \textsc{Even-Odd Partition} (EOP) problem \cite{DBLP:books/fm/GareyJ79}
that is defined as follows:
Given a set of $2n$ positive integers $X=\{x_1, x_2, \ldots, x_{2n}\}$, is there a partition of $X$ into two subsets $X_1$ and $X_2$ such that $\sum_{x_i\in X_1} x_i= \sum_{x_i\in X_2} x_i$ and such that, for each $i\in 1, 2, \ldots, n$, $X_1$ (and also $X_2$) contains exactly one element of $\{x_{2i-1}, x_{2i}\}$?
Let an instance of EOP be given by $2n$ numbers $\{x_1, x_2, \ldots, x_{2n-1}, x_{2n}\}$.  
Construct a path (see Figure~\ref{fig:hardnessline}) with set of nodes $\{s, b_1, b_2, \ldots, b_{2n}, z, z', c_{2n}, c_{2n-1}, \ldots, c_{1}, y, u\}$  ordered from left to right.
The length of each edge corresponds to the Euclidean distance between
its endpoints.
We let $z'-z=b_{2i}-b_{2i-1}=1$, $c_{2i-1}-c_{2i}=1/2$, $z-b_{2n}=c_{2n}-z'=y-c_1=L$, and $b_{2i+1}-b_{2i}=c_{2i}-c_{2i+1} =L$ for all $i\in [n-1]$ where $L=3Cn/2+7/4+1$.
Furthermore, we let $b_1-s=S$, where $S>0$ is specified later.
The node $u$ is placed so that $u-z=b_1-s+Cn+0.5$, where $C$ is a constant that we can set to $C=3$.
There are $4n+3$ agents:
\begin{itemize}
     \item Two agents $h_1, h_2$: These agents have speed~$1$.  Agent $h_1$ is initially at node $s$ and can traverse the interval $[s, b_1]$.
    Agent $h_2$ is initially at node $u$ and can traverse the interval $[z, u]$. 
     \item $2n$ agents $p_1, p_2, \ldots, p_{2n}$: These agents are initially at node $z$; each agent $p_i$ has speed $v_{p_i}=\frac{1}{C+x_i/M}$ where $M=\sum_{i\in [2n]} x_i$.
     Agents $p_{2i-1}$ and $p_{2i}$ for $i\in [n]$ can traverse the interval $[b_{2i-1}, c_{2i-1}]$.   
    \item $2n+1$ agents $f_1, f_2, \ldots, f_{2n}, f_{2n+1}$: These agents have infinite speed. Each agent $f_{i}$ for $i<n$ is initially at node $b_{2i}$ and can traverse the interval $[b_{2i}, b_{2i+1}]$; agent $f_n$ is initially at node $b_{2n}$ and can traverse the interval $[b_{2n}, z]$;   agent $f_{n+1}$ is initially at node $z'$ and can traverse the interval $[z', c_{2n}]$;  each agent $f_{i}$ for $n+1<i\le 2n$ is initially at node $c_{2(2n+1-i)+1}$ and can traverse the interval $[c_{2(2n+1-i)+1}, c_{2(2n+1-i)}]$; agent $f_{2n+1}$ is initially at node $c_1$ and can traverse the interval $[c_1, y]$.
    \end{itemize}

\begin{figure} 
\centering
\includegraphics[scale=0.75]{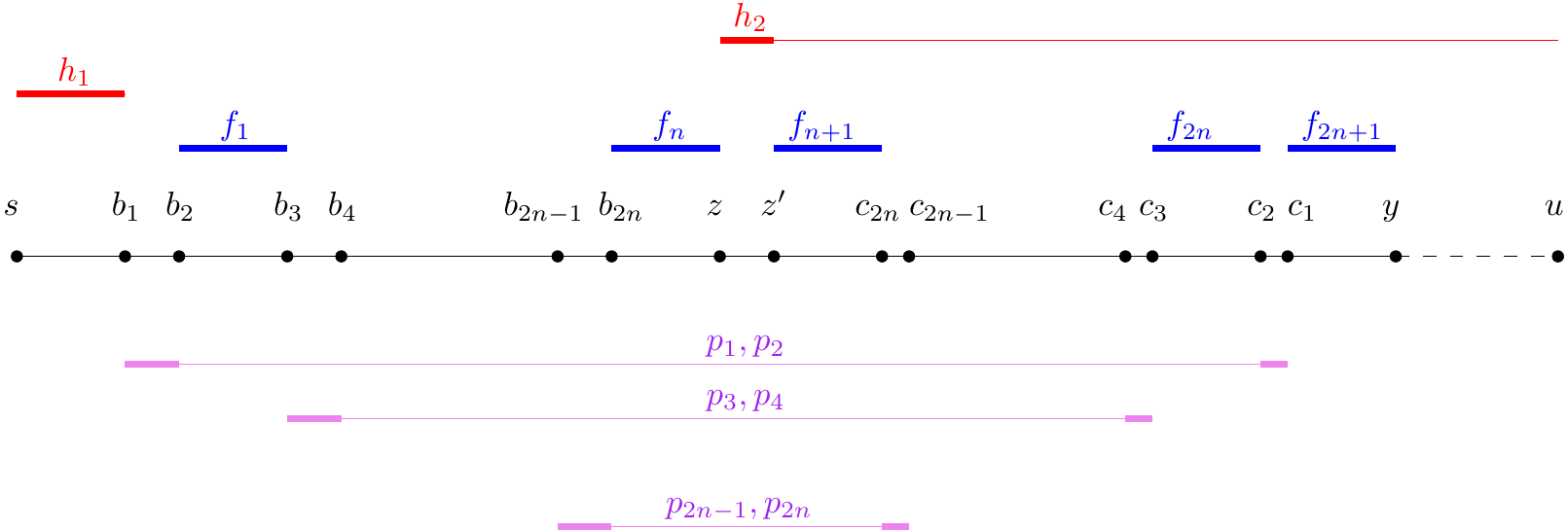}
\caption{Instance of $\DDT_\N$ on a path}
\label{fig:hardnessline}
\end{figure}

The goal is to deliver a package from $s$ to $y$ as quickly as possible. 
We claim that there is a schedule with delivery time at most $T =S+3Cn/2+7/4$
if and only if the original instance of EOP is a yes-instance.

First, assume that the instance of EOP is a yes-instance with partition $(X_1,X_2)$.
Let $(P_1,P_2)$ be the corresponding partition of the set containing the $2n$ agents $p_i$.
We construct a delivery schedule for the package as follows (the colors we mention refer to
those shown in Figure~\ref{fig:hardnessline}).
Until time $S=b_1-s$, the agent $h_1$ carries the package to $b_1$, and all agents $p_i$ pick a bold purple interval and arrive at its left endpoint. This is done in such a way that agent $p_i$ picks
its left bold purple interval (the interval of length 1 at the left end of its range)
if $p_i\in P_1$, and its right bold purple interval (the interval of length $\frac12$ at the
right end of its range) otherwise.
To guarantee that any agent $p_i$ arrives at the left endpoint of its interval by time $S$, we set $S= \max_{i\in [n]} \frac{(n+1-i)(L+1)}{\min\{p_{2i-1}, p_{2i}\}}$.  
Next, the package travels from $b_1$ to $z$ by always being alternatingly carried by an agent $p_i$ over a bold purple interval of length 1 and an infinite speed agent $f$ over a blue interval of length $L$.
The package thus reaches $z$ at time $S + Cn + 1/2$ (because $\sum_{x_i \in X_1} x_i/M=1/2$).
As $u-z = S+Cn + 1/2$, the agent $h_2$ arrives at $z$ at exactly the same time.
Then, the agent $h_2$ carries the package from $z$ to $z'$ in time $1$.
After that, the package is alternatingly carried by an infinite speed agent and an agent $p_i$ until it reaches $y$, taking time $Cn/2 + 1/4$.
The total delivery time is $S+Cn+1/2+1+Cn/2+1/4=S+3Cn/2+7/4=T$ as required.
 
Now consider the case that the original EOP instance is a no-instance. Assume that
there exists a delivery schedule with delivery time $T=S+3Cn/2+7/4$.
First, we claim that neither an agent $p_i$ nor the agent $h_2$ can carry the package over an interval of length~$L$ (these are the intervals $[b_{2n}, z]$, $[z', c_{2n}]$, $[c_1, y]$, and the intervals $[b_{2i}, b_{2i+1}]$, $[c_{2i+1}, c_{2i}]$ for any $i\in [n-1]$). Otherwise, the delivery time is larger than $S+L> S+3Cn/2+7/4$ because these agents have speed at most~$1$ and these blue intervals have length $L=3Cn/2+7/4+1$. Therefore, it
is clear that each of the $2n$ agents $p_i$ and the agent $h_2$ will carry the package over exactly one of the following $2n+1$ intervals: the interval $[b_1,b_2]$, and the $2n$ intervals that lie between two consecutive blue intervals. Next, we claim that $h_2$ must carry the package over
the interval $[z,z']$ with length~$1$. Otherwise, $h_2$ would have to carry the package
over an interval $[c_{2i},c_{2i-1}]$ with length $1/2$, and an agent $p_i$ with speed
less than $1/C$ would have to carry the package through $[z,z']$ with length $1$ instead. Even if we assume that all agents $p_i$ have
the faster speed $1/C$, the resulting schedule would have delivery time
at least $S+nC+C+(n-1)C/2+1/2= S+3Cn/2+C/2+1/2$, which is larger than
$T=S+3Cn/2+7/4$ as $C=3>5/2$. 

This implies that agent $h_2$ carries the package over $[z,z']$ and,
for each $i\in [n]$, one agent among $\{p_{2i-1}, p_{2i}\}$ carries the package on a bold purple interval to the left of $z$ and the other agent carries the package on a bold purple interval to the right of $z'$.  
Let the resulting partition of the agents $p_i$ be $(P_1,P_2)$, and
let the corresponding partition of the original EOP instance be $(X_1,X_2)$.
Suppose the package arrives at node $z$ at time $S+W$.  Since the instance of EOP is a no-instance,
either $W>Cn+1/2$ or $W<Cn+1/2$ holds. As the agents $p_i$ that carry the package to the left of $z$ take time $W$ in total, the agents $p_i$ that carry the package over intervals to the right of $z'$
take time $\frac{2Cn+1-W}{2}$ in total. 
Consider the two cases:
\begin{itemize}
    \item $W> Cn+1/2$. As agent $h_2$ carries the package
    from $z$ to $z'$, the delivery time is
    $S+W+1+\frac{2Cn+1-W}{2}=S+Cn+W/2+3/2> S+3Cn/2+7/4=T$, a contradiction.
    \item $W< Cn+1/2$. As agent $h_2$ carries the package
    from $z$ to $z'$, the package must wait at $z$ until time
    $S+Cn+1/2$ when $h_2$ reaches $z$.
    The delivery time is $S+Cn+1/2+1+\frac{2Cn+1-W}{2}
    =S+2Cn+2-W/2 > S+2Cn+2-(Cn+1/2)/2
    =S+3Cn/2+7/4=T$, a contradiction.
    \end{itemize}

Finally, we observe that in the constructed instance of $\DDT$, the availability of
edge handovers has no impact on the existence of a schedule with delivery time $S+3Cn/2+7/4$.
Therefore, the reduction establishes $\NP$-hardness of both $\DDT_\N$ and $\DDT_\E$.
\end{proof}

\subsection{Algorithms for drone delivery on a tree}
\label{sec:line-alg}%
 In this section, we show that all variants of the drone delivery
 problem can be solved optimally in polynomial time if the graph
 is a tree and all agents have the same speed (for minimizing the
 delivery time) or the same energy consumption rate (for minimizing
 the total energy consumption). In fact, the algorithms extend
 to the case of general graphs if the subgraph $G_a=(V_a,E_a)$
 of each agent is \emph{isometric}, i.e., it
 satisfies the following condition: For any two
 nodes $u,v\in V_a$, the length of the shortest $u$-$v$ path in $G_a$
 is equal to the length of the shortest $u$-$v$ path in~$G$.
 If the given graph is a tree, the subgraph $G_a$ of every agent $a$
 is necessarily isometric because we assume that $G_a$ is connected
 and there is a unique path between any two nodes in a tree.
 
 If all agents have the same speed (for $\DDT$) or the same
 consumption rate (for $\DDC$), handovers at internal points
 of edges can never improve the objective value. Therefore,
 we only need to consider $\DDT_\N$ and $\DDC_\N$ in the following.
 
 The crucial ingredient of our algorithms for the case of
 isometric subgraphs $G_a$ is:
 
 \begin{restatable}{lemma}{lemisometric}
 \label{lem:isometric}%
 Consider a delivery schedule $\mathcal{S}$ that may use an arbitrary number
 of copies of each agent. If the subgraph of each agent is
 isometric and all agents have the same speed (or the same
 energy consumption rate), then $\mathcal{S}$ can be transformed
 in time $O(k(m+n\log n))$ into a schedule in which each agent is used at most once
 without increasing the delivery time (or the energy consumption).
 \end{restatable}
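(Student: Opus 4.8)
The plan is to reuse the single modification step from the proofs of Theorem~\ref{the:minT-NS_approx} and Theorem~\ref{the:minC_approx} and to show that, under the isometry assumption together with equal speeds (for $\DDT$) or equal consumption rates (for $\DDC$), each such step does not increase the objective at all, rather than at most doubling the contribution of the agent as in the general case. Concretely, while there is an agent used more than once, I would take the first such agent $a$, say involved for the first time as the $i$-th trip $(a,u_{i-1},u_i)$ and for the last time as the $j$-th trip $(a,u_{j-1},u_j)$, and replace the trips $i,i+1,\ldots,j$ by a single trip in which $a$ picks up the package at $u_{i-1}$ and carries it along a shortest $u_{i-1}$-$u_j$ path in $G_a$ to $u_j$, deleting the trips of the agents $i+1,\ldots,j-1$. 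Since the trips $1,\ldots,i-1$ are untouched, the prefix of the schedule is unchanged; since $j$ is the last involvement of $a$, the agent $a$ becomes used exactly once and stays so forever, so at most $k-1$ steps are needed.

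The crux is the no-increase claim for $\DDT$. Let $v$ be the common speed. In the original schedule the delivery trips $i,\ldots,j$ form a walk from $u_{i-1}$ to $u_j$ whose length is at least the shortest-path distance $\dist_G(u_{i-1},u_j)$ in $G$, and by isometry $\dist_G(u_{i-1},u_j)=\dist_a(u_{i-1},u_j)$. Because every agent moves at speed $v$, the time elapsing in the original schedule between $a$ picking up the package at $u_{i-1}$ and the package reaching $u_j$ is at least $\dist_G(u_{i-1},u_j)/v=\dist_a(u_{i-1},u_j)/v$ (waiting times only make it larger). In the new schedule, since trip $i$ is $a$'s first involvement, the single fresh copy of $a$ reaches $u_{i-1}$ at time $\dist_a(p_a,u_{i-1})/v$ and the package reaches $u_{i-1}$ at the unchanged time $T(u_{i-1})$, so the pick-up time $\max\{T(u_{i-1}),\dist_a(p_a,u_{i-1})/v\}$ is exactly as before; carrying the package directly then adds exactly $\dist_a(u_{i-1},u_j)/v$. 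Hence the arrival time at $u_j$ does not increase, and as all later trips are unchanged the delivery time does not increase.

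For $\DDC$, let $w$ be the common consumption rate. The new trip of $a$ consumes $w\,\dist_a(p_a,u_{i-1})$ for the empty movement from $p_a$ to $u_{i-1}$ (identical to the empty movement of $a$'s first involvement in the original schedule) plus $w\,\dist_a(u_{i-1},u_j)$ for the delivery. In the original schedule the delivery portions of trips $i,\ldots,j$ alone traverse a walk of length at least $\dist_G(u_{i-1},u_j)=\dist_a(u_{i-1},u_j)$, so their energy is at least $w\,\dist_a(u_{i-1},u_j)$, and all remaining empty movements of the intermediate agents and of the second copy of $a$ only add nonnegative energy that the new schedule saves. Therefore the total energy does not increase. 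Equal rates are used exactly as equal speeds were above, to turn a length inequality into an energy (resp.\ time) inequality, and isometry is what guarantees that $a$'s own shortest path is no longer than the global path the package actually traversed.

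Finally, for termination and running time there are at most $k-1$ modification steps. The dominant cost is precomputing $\dist_a(p_a,\cdot)$ in each $G_a$ by Dijkstra with Fibonacci heaps, costing $O(k(m+n\log n))$; each modification step needs one further shortest-path computation (a Dijkstra from $u_{i-1}$ in $G_a$) to obtain $\dist_a(u_{i-1},u_j)$ and the path to be traversed, contributing $O(k(m+n\log n))$ in total. The main obstacle I anticipate is the timing bookkeeping in the $\DDT$ argument: one must argue carefully that the prefix up to trip $i$ is untouched so that the pick-up time at $u_{i-1}$ is genuinely preserved, and that replacing a possibly long package walk (with waiting) by $a$'s direct shortest path can never delay $u_j$. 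The isometry and equal-speed hypotheses are precisely what make this clean, and verifying that both are used only through the single chain of inequalities is the delicate point.
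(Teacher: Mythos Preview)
Your proposal is correct and follows essentially the same approach as the paper: repeatedly take the first agent used more than once, replace all its trips (and the intervening ones) by a single shortest-path trip in $G_a$ from its first pick-up to its last drop-off, and use isometry plus equal speed/rate to argue the objective does not increase. Your argument is in fact more carefully spelled out than the paper's, which asserts the no-increase in one sentence; the running-time analysis via at most $k-1$ Dijkstra calls is identical.
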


\begin{proof}
Consider the first agent $a$ that is used at least twice in $\mathcal{S}$.
Assume that it carries the package from $u_{i-1}$ to $u_i$ in its first
trip and from $u_{j-1}$ to $u_j$ in its last trip. Change the schedule
so that $a$ carries the package from $u_{i-1}$ to $u_j$, and discard
the trips by agents $i+1,\ldots,j-1$. As $G_a$ is isometric, the
agent~$a$ carries the package from $u_{i-1}$ to $u_j$ along a shortest
path in $G$, and hence neither the time (in case of equal speed)
nor the energy consumption (in case of equal energy consumption rate)
increase by this modification. Repeat the modification step until every
agent is used at most once.

There are at most $k-1$ modification steps, and each of them can
be implemented in $O(m+n\log n)$ time using Dijkstra's algorithm
with Fibonacci heaps~\cite{CLRS/09}.
\end{proof}

 \begin{theorem}
 $\DDT_\N$ (and $\DDT_\E$) can be solved optimally in time
 $O(k(n\log n+m))$ if all agents have the same speed and the
 subgraph of every agent is isometric.
 \end{theorem}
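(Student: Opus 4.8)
The plan is to obtain the optimal single-copy schedule by a two-stage ``relax and repair'' argument, combining the two lemmas already established. Since all agents have the same speed, the observation preceding Lemma~\ref{lem:isometric} tells us that edge handovers can never improve the objective, so it suffices to solve $\DDT_\N$; the resulting schedule is then simultaneously optimal for $\DDT_\E$. First I would invoke Lemma~\ref{lem:minT-NS_opt} to compute, in $O(k(m+n\log n))$ time, an optimal schedule $\mathcal{S}$ for the relaxed problem in which arbitrarily many copies of each agent are available. Let $\mathcal{T}^*$ denote its delivery time.

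The key observation is that this multi-copy relaxation is a genuine lower bound: every feasible single-copy schedule is in particular a feasible multi-copy schedule, so the single-copy optimum is at least $\mathcal{T}^*$. (By Lemma~\ref{lem:minT-once} we may even assume a single-copy optimum uses each agent exactly once.) Next I would apply Lemma~\ref{lem:isometric} to $\mathcal{S}$. Both of its hypotheses hold here---the agents share a common speed and every $G_a$ is isometric, which is automatic when $G$ is a tree---so it transforms $\mathcal{S}$, in $O(k(m+n\log n))$ time, into a schedule $\mathcal{S}'$ in which each agent is used at most once and whose delivery time does not exceed $\mathcal{T}^*$. Note that $\mathcal{S}$ delivers the package along a simple path with at most $n-1$ edges (Lemma~\ref{lem:minT-NS_opt}), bounding the number of repair steps.

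Putting these together yields the optimality sandwich: the single-copy optimum is at least $\mathcal{T}^*$, while $\mathcal{S}'$ is a feasible single-copy schedule of delivery time at most $\mathcal{T}^*$; hence $\mathcal{S}'$ is optimal. The total running time is $O(k(m+n\log n))=O(k(n\log n+m))$, dominated equally by the two lemmas. I expect the only real content to lie in the lower-bound step and in appreciating that isometry is exactly what makes the repair lossless: in Lemma~\ref{lem:isometric} the shortcut lets agent $a$ carry the package directly from $u_{i-1}$ to $u_j$ along a shortest path in $G_a$, and isometry guarantees that $\dist_a(u_{i-1},u_j)$ equals the true distance in $G$, so this direct trip is no longer than the concatenation of trips it replaces. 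Without isometry, the same step only gives a factor-$2$ bound via the triangle inequality applied through the agent's initial position, which is precisely why the general-graph result (Theorem~\ref{the:minT-NS_approx}) is merely an approximation rather than an exact algorithm.
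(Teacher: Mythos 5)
Your proof is correct and follows essentially the same route as the paper: compute the multi-copy optimum via Lemma~\ref{lem:minT-NS_opt}, then repair it with Lemma~\ref{lem:isometric}, using the fact that the multi-copy relaxation lower-bounds the single-copy optimum. The paper's own proof is a two-line version of exactly this argument, and your added remarks on why isometry makes the repair lossless match the intent of Lemma~\ref{lem:isometric}.
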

 
 \begin{proof}
 Compute an optimal delivery schedule that may use multiple copies
 of each agent using Lemma~\ref{lem:minT-NS_opt}
 and then apply  Lemma~\ref{lem:isometric}.
 \end{proof}
 
 \begin{theorem}
 $\DDC_\N$ (and $\DDC_\E$) can be solved optimally in time
 $O(nk^2+n^2k)$ if all agents have the same speed and the
 subgraph of every agent is isometric.
 \end{theorem}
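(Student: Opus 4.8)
The plan is to mirror, almost verbatim, the proof of the analogous statement for $\DDT$ (Theorem for isometric $\DDT_\N$): first solve the relaxation in which arbitrarily many copies of each agent are available, and then collapse the resulting schedule into one respecting the single-copy constraint without paying any extra cost. Concretely, I would first invoke Lemma~\ref{lem:DDCmulti} to compute, in $O(nk^2+n^2k)$ time, a schedule $\mathcal{S}$ of minimum total energy consumption for the multi-copy version of $\DDC_\N$. Since every feasible single-copy schedule is also feasible for the multi-copy relaxation, the energy consumption $C(\mathcal{S})$ is a lower bound on the optimal single-copy energy consumption, which I denote $\mathrm{OPT}$.

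The second step is to apply Lemma~\ref{lem:isometric} to $\mathcal{S}$. Because every agent has the same energy-consumption rate and every subgraph $G_a$ is isometric, that lemma transforms $\mathcal{S}$ into a schedule $\mathcal{S}'$ in which each agent is used at most once and the total energy consumption does not increase, i.e., $C(\mathcal{S}')\le C(\mathcal{S})$. Chaining this with the lower bound gives $C(\mathcal{S}')\le C(\mathcal{S})\le \mathrm{OPT}$, and since $\mathcal{S}'$ is a feasible single-copy schedule, it must be optimal. By Lemma~\ref{lem:eqMinC-SE-SN}, the same schedule is simultaneously optimal for $\DDC_\E$, so edge handovers require no separate argument.

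For the running time, the bottleneck is the computation of the multi-copy optimum via Lemma~\ref{lem:DDCmulti}, which runs in $O(nk^2+n^2k)$ time. The cleanup step of Lemma~\ref{lem:isometric} costs $O(k(m+n\log n))=O(kn^2)$, which is dominated by the $n^2k$ term, so the overall running time stays $O(nk^2+n^2k)$.

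There is no serious obstacle here once the two supporting lemmas are in hand; the step I would check most carefully is the correctness chain $C(\mathcal{S}')\le C(\mathcal{S})\le \mathrm{OPT}$. The first inequality is exactly the content of Lemma~\ref{lem:isometric}, and it is precisely the isometry hypothesis that makes the collapsing step free: when a repeated agent $a$ is rerouted to carry the package directly from its first pickup to its last drop-off along a shortest path of $G_a$, isometry guarantees this path has the same length in $G$, so (with a common consumption rate) the energy never exceeds the total energy the original trips spent traversing that portion. I note that the hypothesis relevant for $\DDC$ is a common \emph{energy-consumption rate} rather than a common speed, matching the second case of Lemma~\ref{lem:isometric}; under equal speeds alone the collapsing step would only preserve delivery time, not energy, so one should read the theorem's assumption accordingly.
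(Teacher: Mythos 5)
Your proposal matches the paper's proof exactly: compute the multi-copy optimum via Lemma~\ref{lem:DDCmulti}, then collapse it to a single-copy schedule with Lemma~\ref{lem:isometric}, with the running time dominated by the first step. Your side remark is also well taken --- the theorem's hypothesis of ``same speed'' should indeed be read as ``same energy consumption rate'' for the $\DDC$ setting, which is the case of Lemma~\ref{lem:isometric} that actually applies here.
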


\begin{proof}
Compute an optimal delivery schedule that may use multiple
copies of each agent using Lemma~\ref{lem:DDCmulti} and then
apply Lemma~\ref{lem:isometric}.
\end{proof}

The problem variants without initial positions can
also be solved optimally in polynomial time:
We simply compute a shortest $s$-$y$-path $P$ and
place on each edge $e$ of $P$ a copy of an arbitrary agent
in $A(e)$ and then apply Lemma~\ref{lem:isometric}.

For the special case where $G$ is a tree, the running-time
for $\DDT_\N$ can be improved to $O(kn)$ by using a simple
algorithm that can even be implemented in a distributed way:
The package acts as a magnet, and each agent moves towards
the package until it meets the package and then follows it
(or carries it) towards $y$ as long as its range allows.
When several agents are at the same location as the package,
the one whose range extends furthest towards $y$ carries
the package. 

For the case where the given graph $G$ is a tree, we first have the
following observation:

\begin{observation}
\label{obs:tree_unit_path}
For any instance of the $\DDT$ or $\DDC$ problem on a tree, there is an optimal solution in which the package is carried along the unique simple path $\Path(s,y)$ between start node $s$ and destination node $y$.
\end{observation}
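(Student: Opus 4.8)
The plan is to start from an arbitrary optimal schedule and massage it so that the package never leaves $\Path(s,y)$, exploiting two tree facts: between any two nodes there is a unique simple path, and the cost of a single agent carrying the package between two fixed nodes (time \emph{or} energy) is monotone in the distance it travels. First I would apply Lemma~\ref{lem:minT-once} to assume each involved agent picks up and drops off the package exactly once, yielding handover nodes $s=u_0,u_1,\dots,u_h=y$ with agent $a_i$ carrying the package from $u_{i-1}$ to $u_i$. Since each $G_{a_i}$ is a connected subtree, and therefore isometric, the cheapest route for $a_i$ between the fixed endpoints is the unique simple path $\Path(u_{i-1},u_i)$ in $G$; hence I may assume each trip follows this path, so the package walk is the concatenation $\Path(u_0,u_1)\cdots\Path(u_{h-1},u_h)$.

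The core step is to push this walk onto $\Path(s,y)$ by projecting everything onto it. Let $\pi$ denote nearest-point projection onto $\Path(s,y)$; since a subpath is geodesically convex in a tree, $\pi$ is non-expansive, i.e.\ $\dist(\pi(x),\pi(x'))\le \dist(x,x')$. I would replace each handover node $u_i$ by $\pi(u_i)$ and each route $\Path(u_{i-1},u_i)$ by $\Path(\pi(u_{i-1}),\pi(u_i))$, which is a subpath of $\Path(s,y)$. Because consecutive images share the endpoint $\pi(u_i)$, they concatenate into a walk contained in $\Path(s,y)$ running from $\pi(s)=s$ to $\pi(y)=y$, so the package is now carried along $\Path(s,y)$. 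Any maximal excursion of the original walk into a subtree hanging off a path node $w_k$ projects entirely to the single point $w_k$, so all trips of that excursion collapse: I would simply delete them and keep the package waiting at $w_k$.

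The main obstacle is feasibility after collapsing, namely that the agents in the new schedule can actually reach the new handover and waiting points within their ranges, together with the timing bound for $\DDT$. For the reachability I would use the tree fact that any route from outside a hanging subtree into it must pass through the subtree's attachment node; thus the agent that originally brought the package into the excursion and the agent that carried it back out both have $w_k$ on their original routes, hence $w_k\in V_{a}$ for both, so one can stop at $w_k$ and the other can pick up there. For the cost, the energy case is immediate since every agent now travels a distance $\dist(\pi(u_{i-1}),\pi(u_i))\le \dist(u_{i-1},u_i)$, so $\DDC$ cannot increase. For $\DDT$ I would argue by induction along the schedule that non-expansiveness lets each agent reach its projected drop-off node no later than before, so every handover occurs no later than originally (any surplus time is absorbed as free waiting), and therefore the package reaches $y$ no later than in the optimal schedule. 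Since the modified schedule has cost at most the optimum and carries the package along $\Path(s,y)$, it is the desired optimal solution.
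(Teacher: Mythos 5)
Your proposal is correct and is essentially the paper's argument in formal dress: the paper's one-sentence proof observes that any excursion off $\Path(s,y)$ must re-enter the path at the node where it left (a tree property) and can therefore be cut out, which is exactly what your nearest-point projection accomplishes when excursions collapse to their attachment nodes. Your version additionally spells out the feasibility and timing details (non-expansiveness, agents reaching projected handover points no later than before) that the paper leaves implicit, but the underlying idea is the same.
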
 

\begin{proof}
If the package was ever transported away from $\Path(s,y)$, it would
have to be returned to $\Path(s,y)$ before moving closer to $y$ on $\Path(s,y)$,
and this detour could be removed from the solution without increasing
the delivery time or energy consumption.
\end{proof}


\begin{lemma}
For any instance of $\DDT$ on a tree, if all agents have the same speed, we can find an optimal solution for both $\DDT_\N$ and $\DDT_\E$ with initial positions in time $O(kn)$. Furthermore, we can find an optimal solution for both $\DDT_\N$ and $\DDT_\E$ without initial positions in time $O(kn)$.  
\end{lemma}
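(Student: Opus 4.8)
The plan is to reduce the tree instance to a one-dimensional problem on the path $\Path(s,y)$ and to solve that problem by a single left-to-right sweep that avoids the priority queue of \cref{alg:minT}, thereby shaving the $\log n$ factor down to $O(kn)$. First I would restrict attention to $\Path(s,y)$: by \cref{obs:tree_unit_path} there is an optimal solution that carries the package only along this path, and since all agents share the same speed, node handovers suffice, so it is enough to solve $\DDT_\N$. Label the path vertices $s=v_0,v_1,\dots,v_p=y$ ($p\le n-1$) with cumulative positions $x_0=0<x_1<\dots<x_p=D:=\dist(s,y)$. Because $G_a$ is a connected subtree, its vertex set meets the path in a contiguous block $[l_a,r_a]$ (the unique $v_i$--$v_j$ path lies inside any connected subgraph containing $v_i,v_j$), and the path edges that $a$ can traverse are exactly those inside this block. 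Since $G_a$ is isometric, the earliest time $\tau_a(v_i)$ at which $a$ can be present at $v_i\in[l_a,r_a]$ equals $\dist_G(p_a,v_i)/v_a$. All of this data is obtained by one BFS/DFS in $G$ from each $p_a$, in total time $O(kn)$.

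Second, I would compute the earliest package arrival time $g(v_i)$ at each $v_i$ by a sweep $i=0,1,\dots,p$. A monotone delivery restricted to the path decomposes into maximal single-agent rides with handovers at nodes, so
\[
g(v_i)=\min_{a:\,r_a\ge i}\ \min_{l_a\le i'\le i-1}\Bigl(\max\{g(v_{i'}),\tau_a(v_{i'})\}+(x_i-x_{i'})\Bigr),\qquad g(v_0)=0 .
\]
The point that makes a heap unnecessary is that $g$ is nondecreasing along the path (to reach $v_i$ a non-backtracking package must first pass every $v_{i'}$ with $i'<i$, and backtracking never helps for a single package), so processing the vertices in path order is a valid evaluation order. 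To get $O(k)$ work per vertex rather than $O(ki)$, I would maintain for every agent a running minimum $\psi_a=\min_{l_a\le i'\le i-1}\bigl(\max\{g(v_{i'}),\tau_a(v_{i'})\}-x_{i'}\bigr)$, folded in once as the sweep passes each admissible pickup vertex; then $g(v_i)=\min_{a:\,r_a\ge i}(x_i+\psi_a)$. Storing the minimizing agent and pickup vertex yields a schedule by backtracking, and the total time is $O(kn)$.

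The schedule reconstructed this way may use a copy of an agent more than once, so finally I would apply \cref{lem:isometric} to merge repeated uses into single trips without increasing the delivery time; on a path each merge step merely lets the agent carry the package along the already-known path between two nodes, costing $O(n)$ rather than a full shortest-path computation, and there are at most $k-1$ steps, keeping the total at $O(kn)$. The resulting node-handover schedule is simultaneously optimal for $\DDT_\N$ and $\DDT_\E$. For the variant \emph{without} initial positions the delivery time cannot beat $\dist(s,y)$ (the package travels distance $D$ at speed~$1$), and this bound is attained: greedily pick agents whose blocks $[l_a,r_a]$ cover $\Path(s,y)$ (possible since $\bigcup_a E_a=E$), place each at the left endpoint of the sub-segment it carries (a path vertex, hence in $V_a$), and start all agents at time~$0$; handovers then occur at block boundaries, which are nodes. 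Computing the cover takes $O(kn)$, and the same schedule is valid for $\DDT_\N$ and $\DDT_\E$.

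The main obstacle is the $O(kn)$ bound rather than correctness, since the latter is largely inherited from \cref{obs:tree_unit_path}, the sufficiency of node handovers under equal speed, and \cref{lem:isometric}. The two delicate points are: (i)~justifying that a single left-to-right sweep (no priority queue) correctly evaluates the earliest-arrival recurrence, which rests on the monotonicity of $g$ along the totally ordered path together with the $\psi_a$ running-minimum bookkeeping; and (ii)~ensuring the conversion to a single-copy schedule stays linear by exploiting the explicit path distances instead of re-running Dijkstra inside \cref{lem:isometric}.
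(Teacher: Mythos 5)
Your proof is correct, but it takes a genuinely different route from the paper's. The paper runs a greedy ``magnet'' algorithm directly on the tree: every agent continuously moves toward the current position of the package (stopping at the boundary of its range), and whenever several agents coincide with the package, the one whose range reaches furthest toward $y$ carries it; optimality is then shown by induction on the handover nodes, arguing that the package reaches each such node as early as possible, and the schedule produced is already single-copy. You instead project the instance onto $\Path(s,y)$, compute the earliest arrival times for the \emph{multi-copy} relaxation by a left-to-right sweep (using the contiguity of each agent's block $[l_a,r_a]$ and the per-agent running minimum $\psi_a$ to get $O(k)$ work per vertex), and then invoke Lemma~\ref{lem:isometric} to collapse repeated uses of an agent. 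Your optimality argument is arguably cleaner -- the relaxation gives a lower bound and your final schedule meets it -- and it makes the $O(kn)$ bound fully explicit, whereas the paper's time analysis is stated somewhat tersely; on the other hand, the paper's algorithm is simpler, needs no de-duplication pass, and admits a distributed implementation. Your handling of the without-initial-positions case (greedy interval cover achieving the trivial lower bound $\dist(s,y)$ divided by the common speed) matches the paper's in substance. Two small points to tidy up: your recurrence and $\psi_a$ omit the division by the common speed (harmless after normalization, but state it), and the claim that node handovers suffice under equal speeds is asserted rather than proved in both your write-up and the paper -- a one-line exchange argument (the slower--faster handover on an edge degenerates when speeds are equal) would close that.
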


\begin{proof}  
Consider this algorithm for $\DDT_\N$ with initial positions:
\begin{itemize}
    \item At any moment in time, starting at $t=0$, each agent $a$ moves from its initial location $p_a$ towards the current location of the package; it only stops moving when it hits an endpoint of its node range. 
    
    \item At each moment in time $t$ when there is a set of agents that is at the same location as the package and at least one of them can carry
    the package in the direction of $y$, let $D(t)$ be the set
    of agents that can carry the package further in the direction of $y$. Let the agent $a$ in $D(t)$ that contains the node $b(a)$ closest
    to $y$ in $V_a \cap V(\Path(s, y))$ take the package and carry it
    towards $b(a)$ until it either reaches $b(a)$ or the package can be transferred at a node to another agent $a'$ whose node $b(a')$ is closer to $y$ than $b(a)$.
\end{itemize}
In this algorithm, the package acts like a magnet and all agents act like pieces of metal that are attracted by the magnet. When the package is ``touched'' by an agent, it moves with unit speed in this agent's edge-range in the package path until it is handed over to another agent, which happens when it is not possible for the agent to carry the package further or if there is another agent at the location of the package whose edge range is larger. We remark that this
algorithm could be implemented in a distributed manner as each agent
only needs to know the current location of the package to determine its movement.

We denote the nodes where the package is picked up by the first agent or handed over from one agent
to another agent or delivered to $y$ by $H=\{u_1, u_2, ... , u_h\}$ with $|H|=h$ in order from $u_1=s$ to $u_h=y$. Obviously, the package is handed to a different agent at each handover node.
Let agent $i$ be the agent that carries the package from $u_{i-1}$ to $u_i$,
for $2\le i\le h$.
Next, we show that the package is delivered to $u_{h}=y$ with minimum delivery time. 
 
We can prove that the package reaches each node of $H$ as early as possible by induction. 
Let $P(u_i)$ be the statement that the package reaches $u_i$ as early as possible.   
The base case $P(u_1)$ is obvious, as the package is at $u_1=s$ at time $0$. Next, we show that for every node $u_i \in H$ with $i<h$, if $P(u_i)$ holds, then $P(u_{i+1})$ also holds. Note that the package reaches $u_{i+1}$ at the time that is equal to the  maximum of the package's arrival time at node $u_i$ and agent $i+1$'s arrival time at $u_i$, plus the travel time from $u_i$ to $u_{i+1}$. By the inductive hypothesis, the package is carried to $u_i$ with minimum arrival time, and the travel time  from $u_i$ to $u_{i+1}$ is fixed because the speed of all agents is the same. So we only need to show that the package is handed at $u_i$ to an agent that reaches $u_i$ as early as possible among all agents that
can carry the package from $u_i$ closer towards $y$.

Assume that $a^*$ is the agent that can reach $u_i$ as early
as possible and that can carry the package closer towards $y$.
In the algorithm, agent $a^*$ first moves from its initial location
to the closest node $v$ on $\Path(s,y)$. At the time when
$a^*$ reaches $v$, there are two possibilities: If the package
is already closer to $y$ than $v$, then $a^*$ will chase the
package and reach node $u_{i}$ as early as possible.
If the package is further from $y$ than $v$, then $a^*$ will
first move towards the package until it is in the same location
and then travel along with the package at the same speed until
the package reaches $u_{i}$. Thus, $a^*$ is guaranteed to
reach $u_i$ by the time when the package reaches $u_i$ or
as early as possible afterwards. Thus, $P(u_{i+1})$ holds
as well.


Note that the proof for $\DDT_\E$ is the same, because handovers in interior points of edges cannot reduce the delivery time if all agents have
the same speed.

For the setting without initial locations, we choose for
each agent $a\in A$ with $V(\Path(s, y)) \cap V_a \neq \emptyset$ as
initial location $p_a$ the node on $V(\Path(s, y)) \cap V_a$ that is closest to~$s$. The remaining agents (if any) do not participate in the delivery of
the package, and their initial locations can be chosen arbitrarily.
It is easy to see that the delivery time is then equal to
the length of $\Path(s,y)$ divided by the speed of the agents, which is clearly optimal.

The algorithms for all problem variants can be implemented in $O(kn)$
time using standard techniques. In particular, for each agent
$a\in A$, we can in $O(n)$ time determine for each node on
$\Path(s,y)$ the earliest time when $a$ can reach that node.
Whenever the package reaches a node $v$ on $\Path(s,y)$,
we can then in $O(k)$ time identify the earliest agent that
can carry the package from $v$ to the next node in the path.
\end{proof}

Furthermore, we have: 
\begin{corollary}
We can determine an optimal solution for both $\DDC_\N$ and $\DDC_\E$ without initial positions in time $O(kn)$ on a tree if all agents have the same consumption rate. 
\end{corollary}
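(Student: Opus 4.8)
The plan is to reduce the corollary to a covering argument that exploits two facts special to this setting: on a tree, each agent's reachable portion of the package path is contiguous, and with a common consumption rate $w$ the energy of any schedule equals $w$ times the total distance traveled. First I would invoke Observation~\ref{obs:tree_unit_path} to assume that in an optimal solution the package is carried along the unique path $\Path(s,y)=(u_0=s,u_1,\ldots,u_p=y)$. This immediately yields a lower bound: every feasible schedule must carry the package over each edge of $\Path(s,y)$ at least once, so the total distance traveled (delivery trips together with any empty movements) is at least the total length $\ell(\Path(s,y))$ of the path, and since all agents share the rate $w$ the energy is at least $w\cdot\ell(\Path(s,y))$.

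Next I would construct a schedule meeting this bound. Because $G_a$ is a connected subtree and $\Path(s,y)$ is a path, the set $E_a\cap\Path(s,y)$ is a contiguous subpath of $\Path(s,y)$; thus each agent can carry the package over a contiguous block of path edges with no backtracking. Since $\bigcup_{a\in A} E_a=E$, every edge of $\Path(s,y)$ lies in some agent's block, so the blocks cover the whole path. I would then run a greedy interval cover along $\Path(s,y)$: starting at $s$, repeatedly choose an agent whose block contains the first uncovered edge and extends furthest toward $y$, and trim the chosen blocks so that they partition the edges of $\Path(s,y)$. Because I am in the setting \emph{without} initial positions, I place each selected agent at the endpoint of its assigned block nearer to $s$, so it incurs zero empty movement and carries the package over exactly its block; non-participating agents are placed arbitrarily and never move. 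Each selected agent is used exactly once and each path edge is carried exactly once, so the total delivered distance is $\ell(\Path(s,y))$ and the energy is $w\cdot\ell(\Path(s,y))$, matching the lower bound and hence optimal. Since no handover ever occurs in the interior of an edge, the same schedule is optimal for both $\DDC_\N$ and $\DDC_\E$, consistent with Lemma~\ref{lem:isometric}.

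For the running time, computing $\Path(s,y)$ takes $O(n)$ on a tree. For each agent $a$ I would scan $\Path(s,y)$ once to identify the contiguous block $E_a\cap\Path(s,y)$ (equivalently, the furthest node toward $y$ that $a$ can reach from the first covered edge), which costs $O(n)$ per agent and $O(kn)$ in total; the greedy cover and the placement of the selected agents then take $O(n+k)$. Hence the whole procedure runs in $O(kn)$ time, as claimed.

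The step I expect to require the most care is the contiguity claim together with the partition step in the greedy cover: I must argue that intersecting a connected subtree $G_a$ with the path $\Path(s,y)$ yields a single subpath (so that no agent needs an empty backtracking move to serve its assigned edges), and that trimming overlaps when the blocks are merged keeps each agent's served edge set contiguous, so that each agent still traverses its block in one sweep with zero empty movement. Everything else — the lower bound and the equivalence of the two handover models — follows immediately from the equal consumption rate and Observation~\ref{obs:tree_unit_path}.
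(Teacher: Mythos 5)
Your proposal is correct and follows essentially the same route as the paper: the paper's proof simply observes that the optimum equals the cost of carrying the package once over the fixed path $\Path(s,y)$ and reuses the agent placement from the preceding $\DDT$-without-initial-positions lemma (each participating agent placed at the node of $V_a\cap V(\Path(s,y))$ nearest $s$), which is the same zero-empty-movement, contiguous-block schedule you build via your greedy interval cover. Your write-up just makes explicit the lower bound and the subtree-intersects-path contiguity argument that the paper leaves implicit (and, incidentally, states the optimal value as $w\cdot\sum_{e\in E(\Path(s,y))}\ell(e)$, which is the correct form of what the paper writes as a division by $w$).
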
   

\begin{proof}
Since the package is delivered following a fixed path, the energy consumption of the optimal solution for both $\DDC_\N$ and $\DDC_\E$ without initial locations is simply equal to the total energy consumption of carrying the package over the fixed path, i.e., $\frac{\sum_{e\in E(\Path(s, y))} \ell(e)}{w}$, where $w$ is the consumption rate of the agents. The schedule is the same as an optimal solution for $\DDT_\N$ or $\DDT_\E$ without initial locations.
\end{proof}

\section{Conclusions}
\label{sec:conclusion}%
In this paper we have studied drone delivery problems
in a setting where the movement area of each drone is
restricted to a subgraph of the whole graph. For $\DDT$,
we have presented a strong inapproximability result
and given a matching approximation algorithm.
For $\DDC$, we have shown $\NP$-hardness and presented
a $2$-approximation algorithm. For the interesting special case
of a path (line), we have shown that $\DDT$ is $\NP$-hard if the agents
can have different speeds.
For trees (or, more generally, the case where the subgraph
of each agent is isometric), we have shown that all problem variants can
be solved optimally in polynomial time if the agents have
the same speed or the same energy consumption.

We leave open the complexity of $\DDC$ on a path.
For the case without initial positions, the complexity
of both $\DDC$ and $\DDT$ on a path remains open.
For $\DDT$ with initial positions on the path,
a very interesting question is how well the problem
can be approximated.
    


\bibliography{main}

    \bibliographystyle{alpha}
    





\end{document}